\newcommand{\bF}{\mathbb{F}}
\newcommand{\bG}{\mathbb{G}}
\newcommand{\bQ}{\mathbb{Q}}
\newcommand{\bV}{\mathbb{V}}
\newcommand{\bE}{\mathbb{E}}
\newcommand{\bR}{\mathbb{R}}
\DeclareMathOperator*{\argmax}{argmax}
\newcommand{\hq}{\hat{Q}}
\newcommand{\mh}{Metropolis-Hastings}
\newtheorem{proposition}{Proposition}[section]
\title{A review of Monte Carlo-based versions of the EM algorithm}
\author[1]{William Ruth}
\affil[1]{Corresponding Author - Department of Statistics and Actuarial Science \\ Simon Fraser University \\ Burnaby, BC  Canada \\ wruth@sfu.ca}
\date{}
\begin{document}
\maketitle

% \doublespacing

\begin{abstract}
    The EM algorithm is a powerful tool for maximum likelihood estimation with missing data. In practice, the calculations required for the EM algorithm are often intractable. We review numerous methods to circumvent this intractability, all of which are based on Monte Carlo simulation. We focus our attention on the Monte Carlo EM (MCEM) algorithm and its various implementations. We also discuss some related methods like stochastic approximation and Monte Carlo maximum likelihood. Generating the Monte Carlo samples necessary for these methods is, in general, a hard problem. As such, we review several simulation strategies which can be used to address this challenge.

    Given the wide range of methods available for approximating the EM, it can be challenging to select which one to use. We review numerous comparisons between these methods from a wide range of sources, and offer guidance on synthesizing the findings. Finally, we give some directions for future research to fill important gaps in the existing literature on the MCEM algorithm and related methods.
\end{abstract}

\section{Introduction}

The EM algorithm \citep{Dem77} is a very influential method for the analysis of missing data. In fact, the original paper detailing this method has ranked among the most cited papers both within the statistics literature \citep{Rya05}, and among science as a whole \citep{Van14}. The EM algorithm is an iterative method which can be used when the observed data are framed as a partial observation from some unobserved `complete' dataset. Unfortunately, there are certain technical barriers which can make the EM algorithm intractable in practice. One such challenge is how to compute conditional expectations given the observed data. A popular way to address this challenge is by replacing conditional expectations with Monte Carlo averages. We refer to such a method which makes this substitution as a Monte Carlo EM (MCEM) algorithm.

The practical implementation of the MCEM algorithm is plagued by numerous obstacles, including when to terminate and what Monte Carlo sample size to use. It is also necessary to devise a way to simulate from the appropriate conditional distribution. Numerous authors have proposed solutions to the termination and sample size problems, although there are is limited guidance for choosing between these solutions.

% In practice, implementing the MCEM algorithm can also be challenging. There are several choices which must be made along the way to balance accuracy with computational cost. While it is always possible to use human judgement, numerous authors have suggested ways to automatically manage the accuracy-computation trade-off. We present many such methods, along with some closely related alternatives to the MCEM algorithm, and provide some guidance on how to choose between them. Another problem which arises when implementing the MCEM algorithm and related methods is how to generate Monte Carlo samples from the necessary distributions. We discuss several computational strategies for simulating from arbitrary distributions. These methods can be used in conjunction with the MCEM algorithm to approximate the EM algorithm in problems where directly applying EM is intractable.

The MCEM algorithm was originally proposed in a paper by \citet{Wei90}, which introduces the method and suggests some simple convergence diagnostics. Follow-up work by \citet{Cha95} uses a pilot study to quantify Monte Carlo uncertainty and inform the design of a follow-up analysis. Related work by \citet{Boo99} and \citet{Caf05} focus on uncertainty quantification for the MCEM algorithm as an approximation to the EM algorithm at each step.

Some alternative methods also exist which solve the same problem as the MCEM algorithm but do not fit as neatly within the EM algorithm framework. One such method is stochastic approximation \citep{Gu98I,Del99,Rob51}, which, much like MCEM, is an iterative algorithm which requires that a Monte Carlo sample be generated at each iteration, but uses a different update formula from the EM algorithm. Another alternative is Monte Carlo maximum likelihood \citep{Gey91}, which uses Monte Carlo averaging to approximate the entire likelihood function of the observed data and estimates the parameter of interest by maximizing this approximate likelihood.

We illustrate all the MCEM algorithms and related methods on a running example. Computation for these analyses is done in \texttt{Julia} \citep{Bez17}. Our code is available in a GitHub repository \citep{Rut23I}.

% There are multiple alternative methods which solve the same problem as MCEM but do not fit as neatly within the EM algorithm framework. Such methods include stochastic approximation \citep{Rob51}, which, much like MCEM, is an iterative algorithm which requires that a Monte Carlo sample be generated at each iteration. However, the update procedure for stochastic approximation differs from that of MCEM. Another alternative is Monte Carlo maximum likelihood \citep{Gey91}, which uses Monte Carlo averaging to approximate the entire likelihood function of the observed data and seeks to maximize this approximate likelihood.

While the idea of the MCEM algorithm sounds promising---replace intractable conditional expectations with Monte Carlo averages---generating the required Monte Carlo samples can also be challenging. Fortunately, numerous methods exist for simulating from difficult distributions, such as importance sampling \citep{Rob04} and Markov chain Monte Carlo \citep{Gel13}. We focus our discussion of simulation strategies on these two methods, but also briefly touch on rejection sampling \citep{Rob04}, sequential Monte Carlo \citep{Del06} and quasi-Monte Carlo \citep{Caf98}.

Numerous comparisons have been made between the various methods discussed above. Some of these comparisons are made in the context of proposing a new method \citep[e.g.,][]{Gu98I,Boo99}, while others are full papers dedicated solely to comparing various methods \citep[e.g.,][]{McC97,Boo01}. We review these comparisons and give some guidance on how to synthesize the sometimes conflicting findings. 

% One such method is importance sampling \citep{Rob04}, which uses a weighted sample from some more tractable distribution to approximate expectations under the target distribution. Another method is Markov Chain Monte Carlo \citep{Gel13}, where observations are generated from a Markov Chain whose stationary distribution matches the target. Other options include rejection sampling \citep{Cho20}, which samples exactly from the target distribution at the expense of increasing the amount of computation required in a non-deterministic way, and sequential Monte Carlo, which constructs a sequence of samples that approach the target distribution \citep{Del06}.

Other authors have various aspects of the MCEM algorithm and related work. \citet{Cel95} reviews several simulation-based methods from a slightly different perspective than the one we focus on here, and gives a comparison on simulated data. See our discussion of \citet{Cel95} in Section \ref{sec:SAEM} for more on this difference. Chapters 10 and 11 of \citet{Cap05} give a textbook-level overview of the EM algorithm and some of its Monte Carlo-based extensions. \citet{Jan06II} reviews numerous aspects of the MCEM algorithm, with a particular focus on considerations required for its implementation. \citet{Nea13} reviews several contributions to the theoretical basis for the MCEM algorithm. \citet{Zho20} discusses the use of MCEM and related methods on missing data problems. In this work, we extend these existing reviews, both by incorporating some more recent developments, and by compiling and synthesizing the various numerical comparisons scattered throughout the literature. We also offer some direction for future work.

The rest of this paper is organized as follows. Section \ref{sec:EM} contains a brief overview of the EM algorithm and some of its properties which are most relevant to the MCEM algorithm. Section \ref{sec:MCEM} reviews different implementations of the MCEM algorithm. Section \ref{sec:alternatives} presents alternatives to MCEM. Section \ref{sec:simulation} details some simulation strategies. In Section \ref{sec:comparison} we discuss comparisons between the MCEM methods and their alternatives. Finally, Section \ref{sec:conc} gives our conclusions and some directions for future work.

\section{The EM Algorithm}
\label{sec:EM}

We begin by setting-up the missing data framework and setting some important terminology which will be used throughout our work. Let $Y$ be the observed data and $X$ be the missing data. Note that $X$ need not correspond to any actual real-world process, but may instead be a conceptual device which facilitates analysis of the data that were actually observed. We refer to the distribution of $Y$ as the ``observed data distribution'', and write $f$ for its density (or mass function). We refer to the joint distribution of $Y$ and $X$ as the ``complete data distribution''\index{Complete Data Distribution}, and write $f_c$ for its density. We refer to the conditional distribution of $X$ given $Y$ as the ``missing data distribution''\index{Missing Data Distribution}, and write $f_m$ for its density. Note that the missing data distribution is not the marginal distribution of the missing data, but rather its conditional distribution given the observed data. When discussing conditional expectations, we always refer to the conditional expectation of the missing data given the observed data unless stated otherwise.

We also write $\ell$, $\ell_c$ and $\ell_m$ for the log-likelihoods based on the observed, complete and missing data distributions respectively. Similarly, we use $S$, $S_c$, $S_m$ for the scores (gradients of the corresponding $\ell$'s) and $I$, $I_c$, $I_m$ for the observed information matrices (negative Hessians of the corresponding $\ell$'s). We emphasize that, in our notation, a subscript $c$ denotes ``complete'' rather than ``conditional'', and a subscript $m$ denotes ``missing'' rather than ``marginal''. We write $\theta \in \Theta \subseteq \bR^p$ for the parameter of interest, and note that $f$, $f_c$ and $f_m$ are parameterized by the same $\theta$ (although in principle they need not all depend on every component of $\theta$)

The EM algorithm is a method for analyzing incomplete data which was formalized by \citet{Dem77}. See \citet{McL08} for an excellent book-length overview of the EM algorithm. We first describe the EM algorithm, then give some of its properties. We also illustrate EM by analysing a toy problem of inferring genotype frequencies based on measured blood group phenotypes.

The EM algorithm consists of iterating two steps. First is the expectation, or ``E'', step, in which an objective function is constructed from the complete data likelihood. Second is the maximization, or ``M'', step, in which the previously computed objective function is maximized. These two steps are then alternated until some convergence criterion is met. Whatever value of $\theta$ the algorithm converges to is used as our parameter estimate. We now go into more detail on each of the two steps.

The E-step of the EM algorithm consists of computing the conditional expectation of the complete data likelihood, given the observed data. That is, the objective function at iteration $k$ is given by
\begin{align}
    Q(\theta|\theta_{k-1}) & = \bE_{\theta_{k-1}}[\ell_c(\theta; y, X) | Y=y]
\end{align}
where $\theta_{k-1}$ is the parameter estimate obtained from the previous iteration. 

The M-step of the EM algorithm consists of maximizing the objective function constructed in the previous E-step. That is, we define $\theta_k = \argmax\limits_\theta Q(\theta|\theta_{k-1})$. Typically, this optimization must be performed numerically via, e.g., gradient ascent or the Newton-Raphson algorithm. See \citet{Noc06} for details and other optimization methods. In fact, it is possible to divide the set of parameters into groups (possibly with each group containing a single parameter) and optimize over each group individually with the others held fixed. This is called the Expectation-Conditional Maximization, or ECM, algorithm \citep{Men93}, and can sometimes be used to exploit the structure of a problem to accelerate convergence.

Notationally, we can combine the E- and M-steps of the EM algorithm into a single ``update function''. We write $M(\theta_{k-1}) = \argmax\limits_\theta Q(\theta|\theta_{k-1})$. The EM algorithm can thus be viewed as the iterative application of this update function, $M$. Note that EM performs local, but not necessarily global, optimization; although, some work has been done toward modifying the EM algorithm and related methods to perform global optimization \citep{Jan06II, Jan06III}. Furthermore, it is common in the types of problems to which the EM algorithm is applied for the (observed data) likelihood surface to be multi-modal \citep{McL08}. Thus, our choice of where to start our iteration can be very important.

\subsection{Properties}

In this section, we discuss some of the main properties of the EM algorithm. The EM algorithm literature is vast, so we present only a few of the highlights which will be most important to us later.

\subsubsection{Ascent Property and Generalized EM}
\label{sec:GEM}

An important feature of the EM algorithm is its so-called ``ascent property''. This property says that an iteration of the EM algorithm never results in a decrease in the observed data likelihood. This fact is somewhat surprising upon first reading, since updates are computed without ever evaluating the observed data likelihood. 

\begin{proposition}[Ascent Property of EM]
    \label{thm2:EM_ascent}
    Let $\theta \in \Theta$, and $\theta' = M(\theta)$ be the EM update from $\theta$. Then $\ell(\theta') \geq \ell(\theta)$.
\end{proposition}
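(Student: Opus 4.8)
The plan is to prove the ascent property via the standard decomposition of the observed-data log-likelihood into the $Q$-function plus an entropy-like term, and then bound each piece separately. The key identity is that for any two parameter values $\theta$ and $\theta'$, the observed log-likelihood decomposes as
\begin{align}
    \ell(\theta') = Q(\theta'|\theta) - H(\theta'|\theta),
\end{align}
where $Q$ is the EM objective function defined above and $H(\theta'|\theta) = \bE_\theta[\ell_m(\theta'; X, y) | Y = y]$ is the conditional expectation of the missing-data log-likelihood. To derive this, I would start from the fundamental factorization of densities, $f(y; \theta') = f_c(y, X; \theta') / f_m(X; \theta' | y)$, which holds for (almost) every value of $X$. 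Taking logarithms gives $\ell(\theta') = \ell_c(\theta'; y, X) - \ell_m(\theta'; X, y)$ pointwise in $X$. Since the left-hand side does not depend on $X$, I can take the conditional expectation $\bE_\theta[\cdot \,|\, Y = y]$ of both sides without changing $\ell(\theta')$, which produces exactly the decomposition above.

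With this identity in hand for both $\theta'$ and $\theta$, the difference in log-likelihoods splits cleanly:
\begin{align}
    \ell(\theta') - \ell(\theta) = \left[Q(\theta'|\theta) - Q(\theta|\theta)\right] - \left[H(\theta'|\theta) - H(\theta|\theta)\right].
\end{align}
The first bracket is nonnegative by the definition of the M-step: since $\theta' = M(\theta) = \argmax_\vartheta Q(\vartheta|\theta)$, we have $Q(\theta'|\theta) \geq Q(\theta|\theta)$. (For a generalized EM step one only needs $Q(\theta'|\theta) \geq Q(\theta|\theta)$, which is the weaker requirement that makes the argument go through verbatim.) The whole result then reduces to showing that the second bracket is nonpositive, i.e. $H(\theta'|\theta) \leq H(\theta|\theta)$.

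The main obstacle, and the crux of the proof, is establishing $H(\theta'|\theta) \leq H(\theta|\theta)$. This is precisely an information inequality: $H(\theta|\theta) - H(\theta'|\theta) = \bE_\theta\!\left[\log\frac{f_m(X; \theta | y)}{f_m(X; \theta' | y)} \,\middle|\, Y = y\right]$ is the Kullback--Leibler divergence from the missing-data density at $\theta'$ to that at $\theta$, which is always nonnegative. I would justify this either by citing Gibbs' inequality directly, or by applying Jensen's inequality to the concave $\log$ function: writing the quantity as $-\bE_\theta[\log(f_m(X;\theta'|y)/f_m(X;\theta|y)) \mid Y=y]$ and using Jensen to bound it below by $-\log \bE_\theta[f_m(X;\theta'|y)/f_m(X;\theta|y) \mid Y=y] = -\log 1 = 0$, where the expectation integrates to one because it is the integral of a density. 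Combining the nonnegative first bracket with the nonpositive second bracket yields $\ell(\theta') \geq \ell(\theta)$, completing the proof. The only technical care needed is ensuring the relevant densities share a common support so the ratios and logarithms are well defined, which I would note as a standard regularity assumption.
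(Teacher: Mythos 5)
Your proof is correct and follows essentially the same route as the paper's: both rest on the decomposition $\ell(\theta) = \ell_c(\theta; y, x) - \ell_m(\theta; y, x)$, take conditional expectations under $\theta$, identify the missing-data term difference as a Kullback--Leibler divergence, and combine its nonnegativity with the M-step inequality $Q(\theta'|\theta) \geq Q(\theta|\theta)$. The only difference is cosmetic: you spell out the KL nonnegativity via Jensen's inequality, whereas the paper simply cites it.
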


\begin{proof}
    We begin by noting that the following decomposition holds for any value of $x$:
    \begin{align}
        \ell(\theta; y) &= \ell_c(\theta; y, x) - \ell_m(\theta; y, x)
    \end{align}
    Subtracting the values of both sides at $\theta$ from their values at $\theta'$ and taking conditional expectations, we get
    \begin{align}
        \ell(\theta'; y) - \ell(\theta; y) &= Q(\theta'|\theta) - Q(\theta|\theta) + \bE_{\theta}[\ell_m(\theta; y, x) - \ell_m(\theta'; y, x)]\\
        &= Q(\theta'|\theta) - Q(\theta|\theta) + \mathrm{KL}(\theta \rightarrow \theta') \label{eq:asc_KL}
    \end{align}
    where the last term in line (\ref{eq:asc_KL}) is the Kullback-Leibler (KL) divergence from the missing data distribution with $\theta = \theta$ to the same distribution with $\theta = \theta'$ \citep{van98}. Note that KL divergences are always non-negative, so we get
    \begin{align}
    \ell(\theta'; y) - \ell(\theta; y) &\geq Q(\theta'|\theta) - Q(\theta|\theta)  \label{eq:EM_ascent}  
    \end{align}
    Finally, since $\theta'$ maximizes $Q(\cdot|\theta)$, we have $\ell(\theta'; y) - \ell(\theta; y) \geq 0$.
\end{proof}

Note that, as long as our model is identifiable, the inequality in line (\ref{eq:EM_ascent}) is strict when $\theta \neq \theta'$. Additionally, in our proof of Proposition \ref{thm2:EM_ascent}, we only required that $Q(\theta'|\theta) \geq Q(\theta|\theta)$, not that $\theta'$ maximize $Q(\cdot|\theta)$. This observation leads to the definition of the ``Generalized EM algorithm'', which replaces the M-step with setting $\theta_k$ to any point in $\Theta$ such that $Q(\theta_k|\theta_{k-1}) \geq Q(\theta_{k-1}|\theta_{k-1})$.

\subsubsection{Recovering Observed Data Likelihood Quantities}

Under regularity conditions \citep[see][]{McL08}, it is possible to compute both the score vector and the observed information matrix of the observed data likelihood using complete data quantities. These regularity conditions consist mostly of being able to interchange the order of differentiation and integration for various functions.

\begin{proposition}
    \label{thm2:EM_decomp}
    Provided that differentiation and integration can be exchanged and that all given expectations are finite, the following identities hold:
    \begin{enumerate}[label=(\roman*)]
        \item $S(\theta; y) = \bE_\theta [S_c(\theta; y, X)|Y=y]$ \label{eq:obs_score_identity}
        \item $I(\theta) = \mathcal{I}_c(\theta) - \mathcal{I}_m(\theta)$ \label{eq:obs_info_identity}\\
        where $\mathcal{I}_c(\theta) := - \bE_\theta \left[ \nabla^2 \ell_c(\theta; y, X) | Y=y \right]$ and $\mathcal{I}_m(\theta) := - \bE_\theta \left[ \nabla^2 \ell_m(\theta; y, X) | Y=y \right]$
    \end{enumerate}
\end{proposition}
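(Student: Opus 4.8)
The plan is to reuse the pointwise log-likelihood decomposition from the proof of Proposition \ref{thm2:EM_ascent}, namely $\ell(\theta; y) = \ell_c(\theta; y, x) - \ell_m(\theta; y, x)$, which holds for every $x$, and then differentiate it with respect to $\theta$ once for part \ref{eq:obs_score_identity} and twice for part \ref{eq:obs_info_identity}. The organizing observation is that the left-hand side of the decomposition does not depend on $x$, so applying the conditional expectation $\bE_\theta[\,\cdot\,|Y=y]$ to a differentiated identity leaves the left-hand side untouched while converting the right-hand side into exactly the conditional expectations that appear in the statement.

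For part \ref{eq:obs_score_identity}, differentiating the decomposition gives $S(\theta; y) = S_c(\theta; y, x) - S_m(\theta; y, x)$ for each $x$. Taking conditional expectations yields
\begin{align}
    S(\theta; y) &= \bE_\theta[S_c(\theta; y, X)|Y=y] - \bE_\theta[S_m(\theta; y, X)|Y=y].
\end{align}
The result then follows once I show that the final term vanishes. This is the \emph{key step}: since $S_m$ is the score of the missing data density $f_m$, I would write $\bE_\theta[S_m(\theta; y, X)|Y=y] = \int \nabla f_m(x|y;\theta)\, dx$ and interchange differentiation and integration to obtain $\nabla \int f_m(x|y;\theta)\, dx = \nabla 1 = 0$, using that $f_m(\cdot|y;\theta)$ is a density integrating to one.

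For part \ref{eq:obs_info_identity}, I would differentiate the decomposition twice to get $\nabla^2 \ell(\theta; y) = \nabla^2 \ell_c(\theta; y, x) - \nabla^2 \ell_m(\theta; y, x)$, negate, and take the conditional expectation. Since $I(\theta) = -\nabla^2 \ell(\theta; y)$ is already free of $x$, and $\mathcal{I}_c$ and $\mathcal{I}_m$ are defined precisely as the negative conditional expectations of the two Hessians on the right, this collapses directly to $I(\theta) = \mathcal{I}_c(\theta) - \mathcal{I}_m(\theta)$ by linearity of the conditional expectation, with no further cancellation required.

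The main obstacle lies entirely in justifying the interchange of differentiation and integration, which is exactly the regularity hypothesis in the statement. For part \ref{eq:obs_score_identity} this interchange is what forces the missing-data score to integrate to zero, and the same interchange is implicitly what allows the derivatives to pass through the integral defining $\ell$ in the first place; dominated-convergence-type conditions on $f_c$ and $f_m$, together with the assumed finiteness of the stated expectations, supply it. Everything else reduces to linearity of conditional expectation and the remark that $x$-free quantities are unaffected by conditioning on $Y=y$.
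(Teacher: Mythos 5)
Your proof is correct, and your part (ii) coincides exactly with the paper's: differentiate the decomposition $\ell(\theta;y) = \ell_c(\theta;y,x) - \ell_m(\theta;y,x)$ twice, negate, and take conditional expectations, noting the left-hand side is free of $x$. For part (i), however, you take a genuinely different route. The paper works directly from the right-hand side: it writes $\bE_\theta[S_c(\theta;y,X)|Y=y] = \int_{\mathcal{X}(y)} \bigl(f_m/f_c\bigr)\,\nabla f_c\,dx$, uses the identity $f_m/f_c = 1/f(y;\theta)$ to pull the observed-data density out of the integral, and then interchanges $\nabla$ with $\int$ applied to $f_c$ to recover $\nabla f(y;\theta)/f(y;\theta) = S(\theta;y)$. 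You instead differentiate the same log-likelihood decomposition used for part (ii), which reduces part (i) to the claim that the missing-data score has zero conditional mean, $\bE_\theta[S_m(\theta;y,X)|Y=y] = \int \nabla f_m\,dx = \nabla 1 = 0$, with the interchange now applied to $f_m$ rather than to $f_c$. Both arguments rest on the same kind of regularity hypothesis, just imposed on different densities, and yours is fully rigorous at the level of detail the paper adopts. What your route buys is uniformity: a single decomposition drives both identities, and the only analytic input is the classical ``score integrates to zero under its own distribution'' fact, applied conditionally. What the paper's route buys is directness: part (i) falls out of the definition of $f_m$ as $f_c/f$ without introducing, and separately justifying, the vanishing of $\bE_\theta[S_m|Y=y]$ --- a fact which, while standard, is exactly as deep as the identity being proven.
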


\begin{proof}
    We start with expression (\ref{eq:obs_score_identity}). Let $\Omega$ be the complete data sample space. Let $\mathcal{Y}$ and $\mathcal{X}$ be the observed and missing data sample spaces respectively. For every $y \in \mathcal{Y}$, let $\mathcal{X}(y) = \{ x \in \mathcal{X}: (y,x) \in \Omega\}$. Note that $f(y; \theta) = \int_{\mathcal{X}(y)} f_c(y, x; \theta) dx$.
    \begin{align}
        \bE_\theta [S_c(\theta; y, X)|Y=y] &= \int_{\mathcal{X}(y)} \nabla \ell_c(\theta; y, x) f_m(y, x; \theta) dx \nonumber\\
        &= \int_{\mathcal{X}(y)} \frac{f_m(y, x; \theta)}{f_c(y, x; \theta)} \nabla f_c(\theta; y, x) dx \nonumber\\
        &= \int_{\mathcal{X}(y)} \frac{1}{f(y; \theta)} \nabla f_c(\theta; y, x) dx\nonumber\\
        &= \frac{1}{f(y; \theta)} \int_{\mathcal{X}(y)} \nabla f_c(\theta; y, x) dx\nonumber\\
        &= \frac{1}{f(y; \theta)} \nabla \int_{\mathcal{X}(y)} f_c(\theta; y, x) dx \nonumber\\
        &= \frac{1}{f(y; \theta)} \nabla f(y; \theta)\nonumber\\
        &= S(\theta; y) \nonumber
    \end{align}
    Proceeding now to (\ref{eq:obs_info_identity}), we decompose the observed data log-likelihood as
    \begin{align*}
        \ell(\theta; y) &= \ell_c(\theta; y, x) - \ell_m(\theta; y, x)
    \end{align*}
    Differentiating twice and taking conditional expectations of both sides yields the required result.
\end{proof}

Note that the matrices $\mathcal{I}_c$ and $\mathcal{I}_m$ are not observed information matrices (negative Hessians), but conditional expectations of observed information matrices. 
An alternative to Proposition \ref{thm2:EM_decomp} part (\ref{eq:obs_info_identity}), which involves only conditional expectations of complete data quantities, is given in the following proposition.

\begin{proposition}[Louis' Identity]
    \label{thm2:info_decomp}
    Let $\hat{\theta}$ be a critical point of the observed data log-likelihood. Assuming that differentiation and integration can be exchanged and that all given expectations are finite, we can write the observed information of the observed data distribution at $\theta$ as
    \begin{align}
        I(\theta) = \mathcal{I}_c(\theta) - \bE_{\theta} [ S_c(\theta) S_c(\theta)^T | Y=y] + S(\theta) S(\theta)^T
    \end{align}
    In particular, if $\hat{\theta}$ is a critical point of the observed data log-likelihood, then
    \begin{align}
        I(\hat{\theta}) = \left. \left(\mathcal{I}_c(\theta)  -  \bE_{\theta} [ S_c(\theta) S_c(\theta)^T | Y=y]\right) \right|_{\theta = \hat{\theta}} \label{eq:info_at_MLE}
    \end{align}
\end{proposition}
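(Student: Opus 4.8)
The plan is to start from the observed-data score identity established in Proposition \ref{thm2:EM_decomp} part (\ref{eq:obs_score_identity}), namely $S(\theta; y) = \bE_\theta[S_c(\theta; y, X)|Y=y]$, and differentiate it once more in $\theta$. Since $I(\theta) = -\nabla^2 \ell(\theta; y) = -\nabla_\theta S(\theta; y)$, the left-hand side immediately produces $-I(\theta)$, so the whole argument reduces to evaluating the Jacobian of the right-hand side.

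Writing the conditional expectation as $\int_{\mathcal{X}(y)} S_c(\theta; y, x) f_m(y, x; \theta) \, dx$, I would differentiate under the integral sign---this is precisely where the interchange-of-differentiation-and-integration hypothesis is used---and apply the product rule to the two sources of $\theta$-dependence. Differentiating the integrand $S_c$ contributes $\bE_\theta[\nabla^2 \ell_c(\theta)|Y=y] = -\mathcal{I}_c(\theta)$. Differentiating the density contributes a second term, which I would simplify using the standard identity $\nabla_\theta f_m = f_m \nabla_\theta \ell_m = f_m S_m$, turning it into $\bE_\theta[S_c(\theta) S_m(\theta)^T | Y=y]$. Collecting the two pieces gives the intermediate identity $I(\theta) = \mathcal{I}_c(\theta) - \bE_\theta[S_c S_m^T|Y=y]$.

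The final step is to eliminate the missing-data score $S_m$ in favour of $S_c$ and $S$. Differentiating the log-likelihood decomposition $\ell = \ell_c - \ell_m$ used in the proof of Proposition \ref{thm2:EM_ascent} gives $S_m = S_c - S$, where $S = S(\theta; y)$ is nonrandom given $Y = y$. Substituting and invoking the score identity once more in the form $\bE_\theta[S_c|Y=y] = S$, I would rewrite $\bE_\theta[S_c S_m^T|Y=y] = \bE_\theta[S_c S_c^T|Y=y] - S S^T$, which upon rearrangement yields Louis' identity. The special case (\ref{eq:info_at_MLE}) then follows by setting $S(\hat{\theta}) = 0$, since $\hat{\theta}$ is a critical point of $\ell$, so the rank-one term $S S^T$ vanishes.

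The main obstacle I anticipate is the bookkeeping in the differentiation-under-the-integral step: one must capture both the $\theta$-dependence inside $S_c$ and inside $f_m$, keep the transposes consistent so that the outer-product term is a genuine $p \times p$ matrix, and confirm that the stated regularity conditions actually license passing $\nabla_\theta$ through the integral. The algebraic manipulations that follow are routine by comparison.
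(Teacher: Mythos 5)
Your proof is correct, but it follows a genuinely different route from the paper's. The paper (following Louis' original derivation) never touches the missing-data score: it writes $\nabla \ell(\theta) = \int_{\mathcal{X}(y)} \nabla f_c(\theta)\, dx / f(\theta)$, applies the quotient rule to get $\nabla^2 \ell$ in terms of $\bE_\theta[\nabla^2 f_c/f_c \,|\, Y=y]$ and $S S^T$, separately expands $\nabla^2 \ell_c = \nabla^2 f_c / f_c - S_c S_c^T$, and eliminates the common term $\nabla^2 f_c/f_c$ between the two displays. You instead differentiate the already-established Fisher identity $S(\theta;y) = \bE_\theta[S_c \,|\, Y=y]$ from Proposition \ref{thm2:EM_decomp}\ref{eq:obs_score_identity}, pick up the density term via $\nabla_\theta f_m = f_m S_m$, and then eliminate $S_m$ through $S_m = S_c - S$ together with a second application of the score identity. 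Both arguments are valid under the stated interchange hypotheses, and your bookkeeping checks out: the intermediate identity $I(\theta) = \mathcal{I}_c(\theta) - \bE_\theta[S_c S_m^T \,|\, Y=y]$ is correct (and is itself equivalent to part \ref{eq:obs_info_identity} of Proposition \ref{thm2:EM_decomp}, since $\bE_\theta[S_c S_m^T\,|\,Y=y] = \mathcal{I}_m(\theta)$ because $\bE_\theta[S_m\,|\,Y=y]=0$), and pulling the nonrandom $S^T$ out of the conditional expectation is legitimate. What each approach buys: yours is more modular, reusing Proposition \ref{thm2:EM_decomp} and exposing Louis' identity as the derivative of Fisher's identity, which ties it neatly to the missing-information principle; the paper's is self-contained, works only with $f_c$ and $f$, and avoids introducing the conditional score $S_m$ and the extra interchange of differentiation with the conditional (rather than joint) integral.
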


\begin{proof}
    We follow the derivation of \citet{Lou82}. For brevity, we write $f(\theta)$ and $f_c(\theta)$ for $f(y; \theta)$ and $f(y, x; \theta)$ respectively. Consider the following two Hessians:
    \begin{align}
        \nabla^2 \ell(\theta) &= \nabla \left[ \int_{\mathcal{X}(y)} \frac{\nabla f_c(\theta) dx}{f(\theta)} \right]\\
        &= \int_{\mathcal{X}(y)} \frac{\nabla^2 f_c(\theta)}{f(\theta)} dx - \frac{1}{f(\theta)^2}\left( \int_{\mathcal{X}(y)} \nabla f_c(\theta) dx \right) \left( \int_{\mathcal{X}(y)} \nabla f_c(\theta) dx \right)^T\\
        &= \bE_\theta \left[ \left. \frac{\nabla^2 f_c(\theta)}{f_c(\theta)} \right| Y=y \right] - \bE_\theta \left[ \left. \frac{\nabla f_c(\theta)}{f_c(\theta)} \right| Y=y \right] \bE_\theta \left[ \left. \frac{\nabla f_c(\theta)}{f_c(\theta)} \right| Y=y \right]^T\\
        &= \bE_\theta \left[ \left. \frac{\nabla^2 f_c(\theta)}{f_c(\theta)} \right| Y=y \right] - S(\theta; y) S(\theta; y)^T \label{eq:hess_obs_lik}\\
        \nabla^2 \ell_c(\theta) &= \nabla \left( \frac{\nabla f_c(\theta)}{f_c(\theta)} \right)\\
        &= \frac{\nabla^2 f_c(\theta)}{f_c(\theta)} - S_c(\theta) S_c(\theta)^T \label{eq:hess_comp_lik}
    \end{align}
    Combining lines \ref{eq:hess_obs_lik} and \ref{eq:hess_comp_lik}, we get
    \begin{align}
        \nabla^2 \ell(\theta) &= \bE_\theta [ \nabla^2 \ell_c(\theta) | Y=y] + \bE_\theta [ S_c(\theta) S_c(\theta)^T | Y=y] - S(\theta; y) S(\theta; y)^T \label{eq:hess_obs_lik2}
    \end{align}
    Finally, evaluating line (\ref{eq:hess_obs_lik2}) at $\theta = \hat{\theta}$ makes the rightmost term vanish, thereby yielding the required expression.
\end{proof}

Proposition \ref{thm2:info_decomp} is known as Louis' standard error formula. Other decompositions for the observed information matrix of the observed data likelihood do exist; see, e.g., \citet{Oak99,McL08}. However, the one due to Louis will be most useful to us later.

\subsection{Example: Gene Frequency Estimation}
\label{sec:eg-genes}

In this section, we describe an analysis which will be used as a running example on which to illustrate the methods we describe. 

Consider the genetics problem of estimating allele frequencies based on observed phenotypes. Often, a single phenotype can be encoded by multiple genotypes with different configurations of dominant and recessive alleles. This is sometimes referred to as the problem of gene frequency estimation\index{Gene Frequency Estimation}. Our analysis closely follows Example 2.4 from \citet{McL08}.%\footnote{Although these authors do give data in their example, there is little context around this data, and we have been unable to locate more information from their references. We have opted instead to use a different dataset.}

We investigate a simplified model for blood type which consists of only the ABO blood group. See Chapter 5 of \citet{Dea05}, for a detailed overview of the biology of blood types\index{Blood Type}. There are three alleles for this gene: A, B and O. Allele O is recessive, while alleles A and B exhibit co-dominance. That is, genotypes AO and AA encode blood type A, genotypes BO and BB encode blood type B, genotype OO encodes blood type O, and genotype AB encodes blood type AB. Suppose that we seek to estimate the proportion of each allele within a population, based on a sample of individuals' phenotypes. \citet{Fuj78} report blood types of 4,464,349 people in Japan collected between 1964 and 1975. This sample is so large that any reasonable statistic will have sampling variability practically equal to zero. In order to retain a non-trivial level of uncertainty for the purposes of illustration, we focus on a single administrative division, Oto, in Nara Prefecture. See Figure \ref{tab2:blood_type} for details. We note here that, given the simplicity of our example, we use it to illustrate the various methods but caution readers against drawing any general conclusions for more challenging problems. In particular, we calibrate the amount of computing used by each method using general recommendations by the authors who proposed these methods (or by choosing simple numbers which might be selected for a first pass by an analyst), rather than standardizing computational effort across methods.

\begin{table}
    \centering
    \caption{Observed frequency and theoretical probability of each blood type \citep{Fuj78}}
    \begin{tabular}{c|cccc}
        Blood Type & O & A & B & AB\\
        \hline
        Random Variable & $Y_1$ & $Y_2$ & $Y_3$ & $Y_4$\\
        Observed Frequency & 10 & 16 & 7 & 1\\
        Probability & $r^2$ & $p^2 + 2pr$ & $q^2 + 2qr$ & $2pq$
    \end{tabular}
    \label{tab2:blood_type}
\end{table}

Let $Y_1$, $Y_2$, $Y_3$ and $Y_4$ be the number of people with blood type O, A, B and AB respectively, and $Y = (Y_1, Y_2, Y_3, Y_4)$. Let $r$, $p$ and $q$ be the proportions of alleles O, A and B respectively within the population of interest. Since $r + p + q = 1$, we let $\theta = (p, q)$ be our target of inference. Pretending that the population size is fixed and that simple random sampling was employed, $Y$ follows a multinomial distribution with $n = 34$. Assuming homogeneous genetic mixing, the probability vector for $Y$ is $\pi = (r^2, p^2 + 2pr, q^2 + 2qr, 2pq)$, where we retain $r$ in our notation as shorthand for $1 - p - q$.

Maximizing the likelihood in this model involves solving the score equations, a system of two 3rd-degree polynomials in $p$ and $q$. This can be done numerically, and gives estimates $p = 0.299$ and $q = 0.128$. These values match the ones given by \citet{Fuj78}. The information matrix and asymptotic covariance (inverse information matrix) are given by
\begin{align}
    I(\hat{\theta}) &= \begin{bmatrix}
        276 & 84.8\\
        84.8 & 584
    \end{bmatrix}\\
    \hat{\Sigma}_\mathrm{MLE} &= \begin{bmatrix}
        3.79 \cdot 10^{-3} & -5.49 \cdot 10^{-4}\\
        -5.49 \cdot 10^{-4} & 1.79 \cdot 10^{-3}
    \end{bmatrix} \label{eq:obs_lik_SE}
\end{align}
The asymptotic standard errors for our estimators are thus $\sqrt{3.79 \cdot 10^{-3}} = 0.062$ and $\sqrt{1.79 \cdot 10^{-3}} = 0.042$ for $\hat{p}$ and $\hat{q}$ respectively.

\subsubsection{Complete Data}

The problem of gene frequency estimation would be much simpler if we could also observe individuals' genotypes. As such, consider augmenting the observed data $Y$ by further classifying individuals by genotype. Let $X = (X_1, \ldots, X_6)$ be the genotypes of the individuals represented in Table \ref{tab2:blood_type}. See Table \ref{tab2:blood_type_complete}.

\begin{table}
    \centering
    \caption{Terminology and probabilities for our augmented version of the dataset in \citet{Fuj78}. We also give the blood type coded for be each genotype.}
    \begin{tabular}{c|cccccc}
        Genotype & OO & AO & AA & BO & BB & AB\\
        \hline
        Random Variable & $X_1$ & $X_2$ & $X_3$ & $X_4$ & $X_5$ & $X_6$\\
        Probability & $r^2$ & $2pr$ & $p^2$ & $2qr$ & $q^2$ & $2pq$\\
        Blood Type & O & A & A & B & B & AB
    \end{tabular}
    
    \label{tab2:blood_type_complete}
\end{table}

Note that we can express the elements of $Y$ in terms of $X$. Specifically, $Y_1 = X_1$, $Y_2 = X_2 + X_3$, $Y_3 = X_4 + X_5$ and $Y_4 = X_6$. This corresponds to summing components of $X$ which correspond to the same blood type. The distribution of $X$ is multinomial, with the same sample size as $Y$, and probability vector given in Table \ref{tab2:blood_type_complete}.

See Appendix \ref{app:blood_complete} for the complete data likelihood function and its derivatives.

\subsubsection{EM Algorithm}

The gene frequency estimation problem fits nicely into the EM algorithm framework. In this section, we present key quantities and results of our analysis. See Appendix \ref{app:blood}, especially parts \ref{app:blood_miss} and \ref{app:EM}, for more details.

The EM objective function at iteration $k$ is
\begin{align}
    Q(\theta|\theta_{k-1}) &\equiv  \bE_{\theta_{k-1}}(n_O | y) \log r + \bE_{\theta_{k-1}}(n_A | y) \log p + \bE_{\theta_{k-1}}(n_B | y) \log q\\
    &=: \nu^{(k-1)}_O \log r + \nu^{(k-1)}_A \log p + \nu^{(k-1)}_B \log q
\end{align}
where $\nu^{(k-1)}_O$, $\nu^{(k-1)}_A$ and $\nu^{(k-1)}_B$ are the expected number of O, A and B alleles respectively given $Y=y$ and $\theta = \theta_{k-1}$. See Appendix \ref{app:blood_miss} for explicit formulas. % Maximizing this objective function corresponds to solving the system of equations given in equations (\ref{eq:blood_update1}) and (\ref{eq:blood_update2}) of Appendix \ref{app:EM}. (too much pointing to Appendix \ref{app:EM})

Starting with $\theta_0 = (1/3, 1/3)$ corresponding to equal proportions of the three alleles, Figure \ref{fig:blood_EM_traj} gives trajectories for the EM estimates of $p$ and $q$ using the data in Figure \ref{tab2:blood_type}. These estimates converge quite quickly to the maximizer of the observed data likelihood, given by the horizontal dashed lines.
\begin{figure}
    \centering
    \caption{Trajectory of EM estimates for $p$ and $q$ for the blood type example. Horizontal dashed lines give the values of the MLE.}
    \label{fig:blood_EM_traj}
    \includegraphics[width=0.9\textwidth]{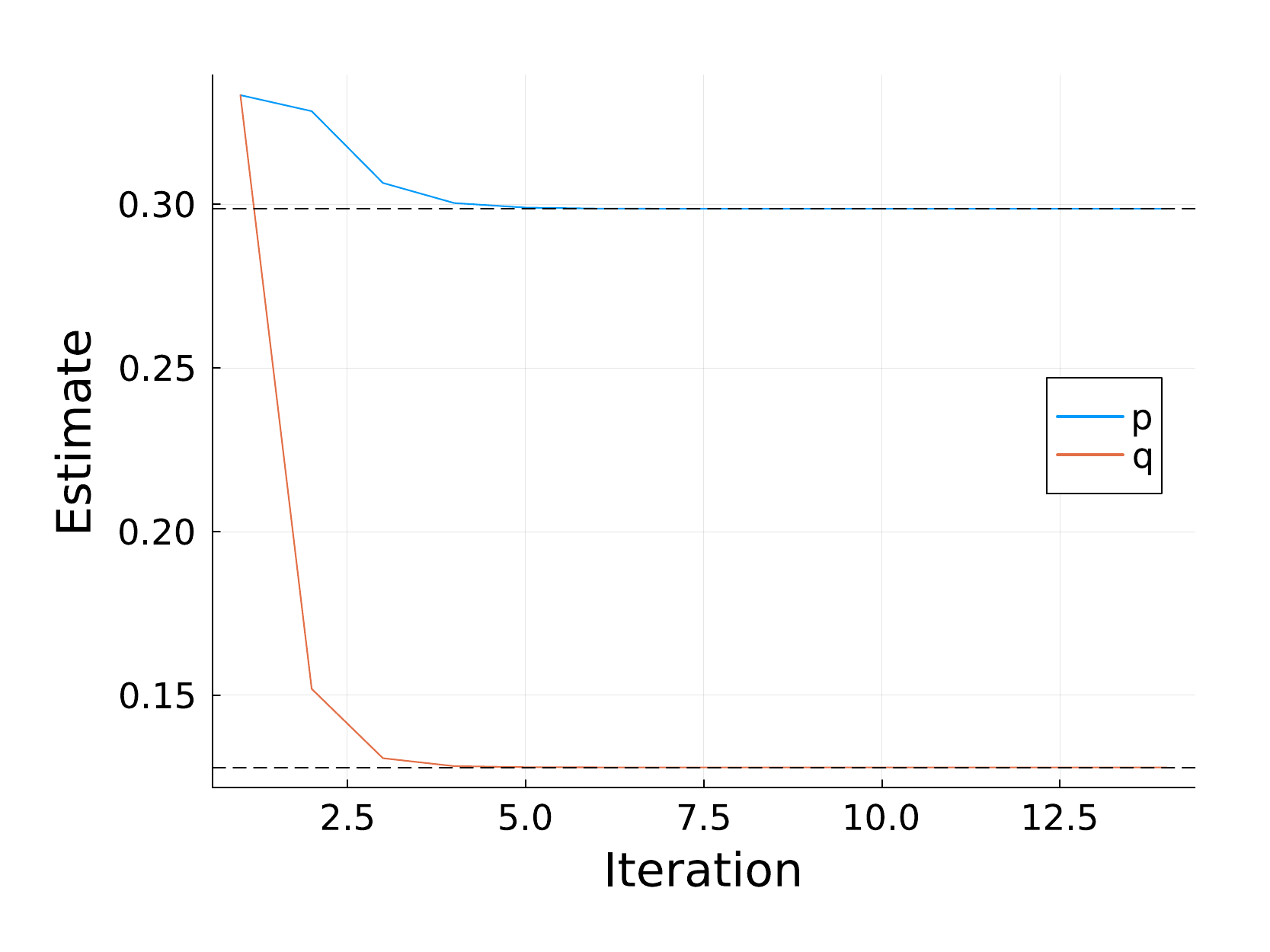}
\end{figure}
Beyond computing the observed data MLE, we also need the standard error of our estimator. To this end, we compute the observed data information matrix using Louis' Method (see Proposition \ref{thm2:info_decomp}). The asymptotic covariance matrix of our MLE is approximately the inverse of this information matrix. Omitting details (see Appendices \ref{app:blood_complete} and \ref{app:blood_miss}), both the observed data information matrix and asymptotic covariance match those obtained from the observed data likelihood. 

\section{The Monte Carlo EM Algorithm}
\label{sec:MCEM}

The Monte Carlo EM, or MCEM, algorithm was first proposed by \citet{Wei90} (see also Section 4.5 of \citealp{Tan96}, for a textbook-level treatment of the basic idea with numerous examples). This method proceeds by replacing the conditional expectation in the E-step of the EM algorithm with a Monte Carlo average. More precisely, at each iteration we generate observations from the missing data distribution (i.e.\ the conditional distribution of the missing data given the observed data), and average the complete data likelihood over this Monte Carlo sample. Formally, at a given iteration of the MCEM algorithm, let $X_1,\ldots, X_M$ be a Monte Carlo sample (not necessarily iid) from the distribution of $X|Y=y$ with $\theta$ set to some value, say $\theta_0$. Write
\begin{align}
    \hat{Q}(\theta|\theta_0) &= \sum_{i=1}^M w_i \ell_c(\theta; y, X_i) \label{eq:MCEM_objective}
\end{align}
where the $w_i$ are sampling weights. We write $\hat{\bE} \phi$ \index{$\hat{\bE}$} for the estimated mean of a function $\phi$ based on a weighted sample (for iid sampling, take all weights equal to $1/M$), so $\hat{Q}(\theta|\theta_0)$ can be re-written as $\hat{\bE} \ell_c (\theta; y, X)$. In this section, we focus only on iid sampling, but see Section \ref{sec:simulation} for discussions of some alternative sampling methods. The estimate of $\theta$ at iteration $k$ is then the maximizer of the MCEM objective function: $\hat{\theta} = \argmax_\theta \hat{Q}(\theta|\theta_{k-1})$. Write $\hat{\theta}_{k}$ for the MCEM estimate at iteration $k$. The next iteration then proceeds with maximizing $\hat{Q}$ with $\theta_0 = \hat{\theta}_k$.  

Provided that the MCEM algorithm converges to a critical point of the observed data likelihood, we can use Proposition \ref{thm2:info_decomp} to estimate the observed data information matrix. Specifically, after declaring convergence, we generate a new Monte Carlo sample and use it to approximate the conditional expectations in equation (\ref{eq:info_at_MLE}).

The MCEM algorithm has the advantage of circumventing the challenge of computing potentially intractable conditional expectations for the EM algorithm. However, this analytical simplification comes at the cost of introducing some new computational difficulties. In this section, we outline the main problems faced by the MCEM algorithm and present various solutions which have been proposed in the literature. We focus primarily on practical aspects of the MCEM algorithm; see \citet{For03} for a thorough analysis of the convergence properties of MCEM, and \citet{Nea13} for a review of this and other convergence theory for MCEM. Furthermore, \citet{For03} also suggest an offline averaging procedure, where a trajectory of estimates is replaced by its cumulative means \citep[see also Section 11.1.2.2 of][]{Cap05}. They prove that this average sequence converges faster than the original trajectory, provided that certain conditions are met for the growth rate of the Monte Carlo sample sizes. We do not explore this averaging process any further.

Two connected problems which have received considerable attention in the literature are how to choose the Monte Carlo sample size at each iteration, and when to terminate the MCEM algorithm. These were identified early by \citet{Wei90}, but did not receive systematic treatment until later. \citet{Wei90} suggest examining a plot of the parameter estimates across iterations, and either terminating or increasing the Monte Carlo size (i.e. Monte Carlo sample size) when the plot appears to stabilize. \citet{Cha95} use a pilot study to choose the Monte Carlo sample size, and terminate when a confidence interval for the improvement in the observed data log-likelihood between successive iterations contains zero. \citet{Boo99} frame each MCEM iteration as an M-estimation problem for estimating the deterministic EM update. They increase the Monte Carlo size if an asymptotic confidence interval for the EM update contains the previous iteration's parameter estimate, and terminate when multiple consecutive iterations' estimates have sufficiently small relative error. \citet{Caf05} build confidence bounds for the increment in the EM objective function at each iteration of the MCEM algorithm. They increase the Monte Carlo size at each iteration until the lower bound is positive and terminate when the upper bound is sufficiently small.

In the rest of this section, we give more detail on each of the implementations introduced above. We also illustrate each method on the blood type dataset described in Section \ref{sec:eg-genes}. The relevant conditional distribution and likelihood calculations are described in Appendix \ref{app:blood}.

\subsection{Early Work (\citealp{Wei90})}

In their seminal work, \citet{Wei90} propose the MCEM algorithm and present a simple implementation. They illustrate that the complete data gradient and Hessian are easily obtained at each iteration from the Monte Carlo sample and, following \citet{Lou82}, give an estimator for the observed data information matrix. Regarding convergence, \citeauthor{Wei90} recommend plotting the parameter estimates across iterations and stopping when the estimates appear to stabilize around some constant. When this stabilization is detected, one can either declare convergence, or increase the Monte Carlo size and continue iterating until the estimate trajectory again stabilizes.

In order to apply the MCEM algorithm to estimate allele frequencies in the blood type problem, we must specify the number of iterations, $K$, and the Monte Carlo size for each iteration, $M$. Starting conservatively, we use $K=50$ and $M=100$. Figure \ref{fig:blood_naive_MCEM_traj1} gives trajectories of the MCEM estimates of $p$ and $q$. These estimates appear to converge quickly to a stationary mean, but there is still some uncertainty around this mean. As such, we run MCEM for another 20 iterations with $M=1000$, staring with the final value from our first run. See Figure \ref{fig:blood_naive_MCEM_traj2}. The trajectories from our second run are much more stable around their means. We use the final values from these trajectories as our estimates: $\hat{p} = 0.298$ and $\hat{q} = 0.128$. These values closely match the maximizer of the observed data likelihood. 

As can be seen in Section \ref{sec:eg-genes}, the standard errors of our estimators are on the order of $10^{-2}$. This is much larger than the Monte Carlo variability seen in either plot of Figure \ref{fig:blood_naive_MCEM_traj}.

% \begin{figure}
%     \centering
%     \caption{Trajectory of MCEM estimates of $p$ and $q$ for the blood type example. The horizontal lines correspond to maximum likelihood estimates.}
%     \label{fig:blood_naive_MCEM_traj}
%     \begin{subfigure}{0.45\textwidth}
%         \centering
%         \includegraphics[width = 1\textwidth]{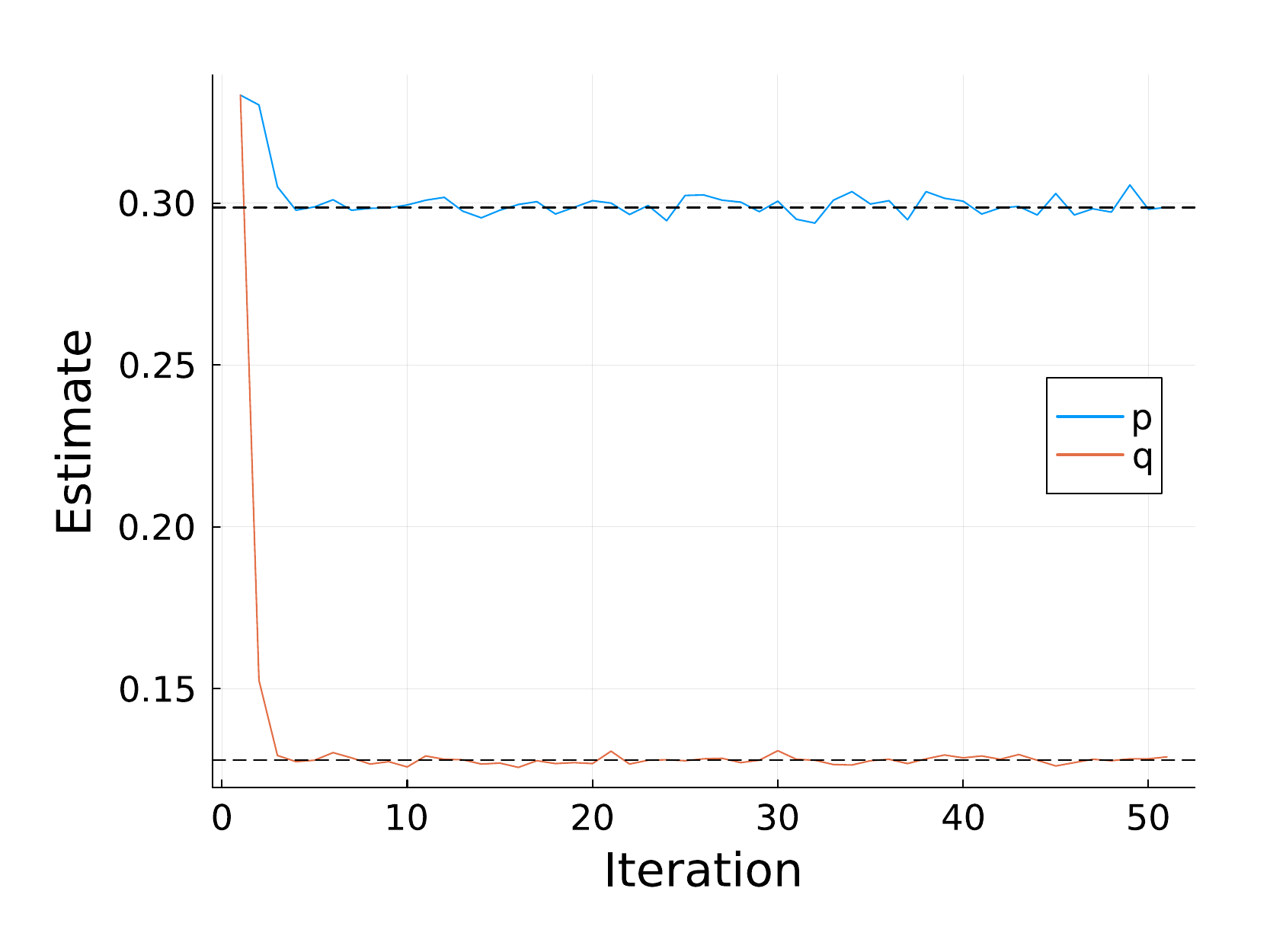}
%         \caption{$M=100$}
%         \label{fig:blood_naive_MCEM_traj1}
%     \end{subfigure}
%     \hfill
%     \begin{subfigure}{0.45\textwidth}
%         \centering
%         \includegraphics[width = 1\textwidth]{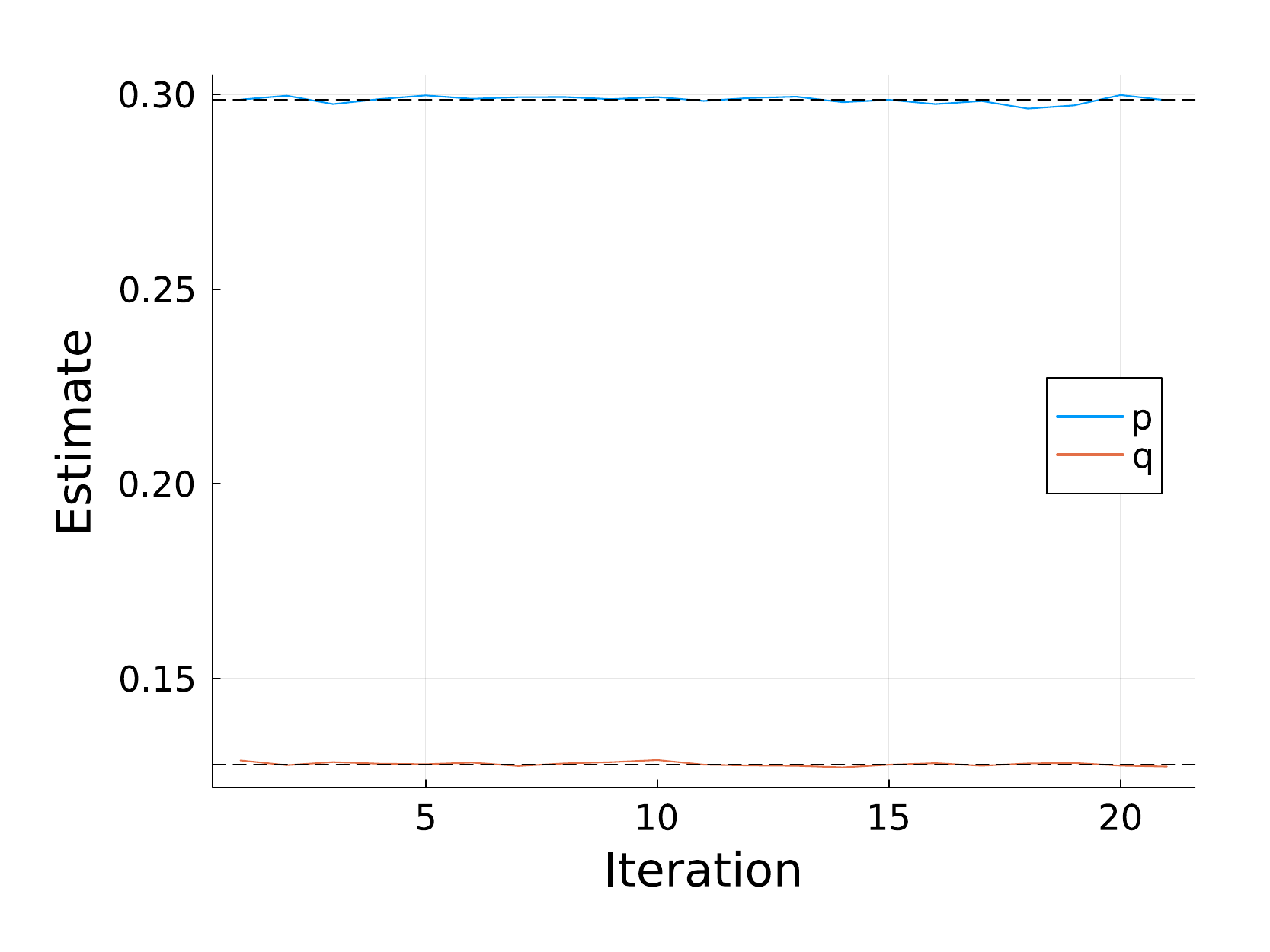}
%         \caption{$M=1000$}
%         \label{fig:blood_naive_MCEM_traj2}
%     \end{subfigure}
   
% \end{figure}

\begin{figure}


    \centering
    \caption{Trajectory of MCEM estimates of $p$ and $q$ for the blood type example. The horizontal lines correspond to maximum likelihood estimates.}
    \label{fig:blood_naive_MCEM_traj}
    \subfloat[$M=100$]{\includegraphics[width = 0.75\textwidth]{plots/Blood_Type/naive_MCEM_traj1.pdf} \label{fig:blood_naive_MCEM_traj1}}
    
    \subfloat[$M=1000$]{\includegraphics[width = 0.75\textwidth]{plots/Blood_Type/naive_MCEM_traj2.pdf} \label{fig:blood_naive_MCEM_traj2}}   
\end{figure}

\subsection{Running a Pilot Study \citep{Cha95}}

Building on the ideas of \citeauthor{Wei90}, \citet{Cha95} develop a method for both choosing the Monte Carlo size and deciding when to terminate the MCEM algorithm. The method of \citeauthor{Cha95} includes numerous choices for which they do not give specific guidance. We describe the procedure in general terms, but details for any particular implementation will need to be explored in the context of the dataset being analysed. 

The algorithm presented by \citeauthor{Cha95} is based on an identity which allows us to estimate observed data likelihood ratios by Monte Carlo averages of complete data likelihood ratios. More precisely, we write
\begin{align}
	\frac{\mathcal{L}(\theta_1; y)}{\mathcal{L}(\theta_2; y)} &= \bE_{\theta_2} \left[ \left. \frac{\mathcal{L}_c(\theta_1; y, X)}{\mathcal{L}_c(\theta_2; y, X)} \right| Y=y \right] \label{eq:Cha95}
\end{align}
See \citeauthor{Cha95} for a derivation. To apply equation (\ref{eq:Cha95}) to the MCEM algorithm, we replace the conditional expectation on the right-hand side with a Monte Carlo average from the corresponding conditional distribution. This adjustment allows us to estimate log-likelihood ratios from the observed data distribution, without ever directly evaluating the observed data likelihood.

The algorithm proposed by \citeauthor{Cha95} can be divided into two parts. The first part is a pilot study, in which we compute a standard error for our log-likelihood ratio estimator near the MLE, and determine what Monte Carlo size is required to get this standard error below a pre-specified threshold. Note that this standard error quantifies Monte Carlo uncertainty, not sampling variability of the observed data. In the second part, we increase the Monte Carlo size to get confidence intervals narrower than some pre-specified threshold, and continue iterating until a confidence interval for the true observed data log-likelihood ratio contains zero. This two-part procedure reflects the suggestion of \citet{Wei90} to run MCEM until it appears to stabilize, then increase the Monte Carlo sample size to get a more precise estimate.

The pilot study portion of \citeauthor{Cha95}'s method consists of running the MCEM algorithm with a fixed, ``moderately large'' Monte Carlo size (although they give no general guidance on exactly how large this should be).  In addition to tracking the parameter estimates across iterations, we also record the estimated log-likelihood ratio of the current estimate relative to the starting point of the algorithm. This ratio is computed by keeping a running cumulative sum of all one-step log-likelihood ratios. We terminate our pilot study after a pre-specified number of iterations, and identify the estimate with largest estimated observed data log-likelihood ratio. 

After concluding our pilot study, we select a few estimates from iterations near the maximizer (\citeauthor{Cha95} suggest the 10 which follow the maximizer but give no justification for this number). All the selected estimates are thought of as approximately equivalent to the maximizer. We then perform a few single-iteration MCEM runs from each of the chosen estimates and pool information about the variability of a one-step change in the estimated observed data log-likelihood. See Section 2.3 of \citet{Cha95} for a more detailed description of their pilot study and variance estimator. Once we have an estimate of the Monte Carlo variance of our log-likelihood ratio estimator, we calculate what Monte Carlo sample size would be required to get that variance below some pre-specified threshold\footnotemark. We then use this newly determined Monte Carlo size for a follow-up MCEM run.

\footnotetext{Provided that the one-step observed data likelihood ratio is evaluated close to the MLE, \citeauthor{Cha95} show that its Monte Carlo standard error scales like $1/M$ rather than the usual $1 / \sqrt{M}$, where $M$ is the Monte Carlo size.}

For our second run of MCEM, we return to the optimal parameter estimate from our pilot study and, using our newly determined Monte Carlo size, continue iterating the MCEM algorithm. At each step now, we also construct a confidence interval for the true observed data log-likelihood ratio corresponding to the current parameter update (i.e., between two consecutive parameter values, not from the starting point to the current estimate). We terminate the algorithm when such a confidence interval contains zero. This corresponds to no evidence of an improvement in the observed data likelihood.

Note that the estimated observed data log-likelihood ratio computed at each iteration is by a Monte Carlo average. In order to avoid bias, we do not recycle the current iteration's Monte Carlo sample to estimate this ratio. Instead, we generate the Monte Carlo sample which will be used in the next iteration, and use this new sample to estimate the log-likelihood ratio\footnotemark. We are then free to use this new Monte Carlo sample to compute the next iteration's parameter update. 

\footnotetext{Due to our Monte Carlo sample using the new parameter estimate rather than the old, we must actually estimate the reciprocal of the likelihood ratio that we want, then multiply its logarithm by $-1$. This is reflected in the formulas of \citeauthor{Cha95} but not discussed explicitly.}

Although a superficial reading of \citeauthor{Cha95} suggests that their method is quite complicated, its implementation on our blood type dataset is fairly straightforward. Applying this algorithm to our example, we get the parameter estimate trajectory shown in Figure \ref{fig:blood_CL_MCEM_estimates}. Figure \ref{fig:blood_CL_MCEM_likelihoods} gives the trajectory of estimated observed data log-likelihood ratios relative to the starting point of the algorithm (starting with $\hat{p} = \hat{q} = 1/3$), along with pointwise $95\%$ Wald-type confidence bands for the post-pilot study iterations. Recall that these confidence bands are used to assess convergence of the algorithm. We require that the standard error of our estimated log-likelihood increment be at most $10^{-3}$, since this is much smaller than the statistical uncertainty given in Section \ref{sec:eg-genes}. Note that this standard error only applies near the maximizer of the observed data MLE, so we do not report confidence intervals until the second stage of our MCEM run. We see in Figure \ref{fig:blood_CL_MCEM_estimates} that the Monte Carlo fluctuations are much smaller than the statistical fluctuations in our problem. Our estimated parameter values are $\hat{p} = 0.298$ and $\hat{q} = 0.129$, which closely match the MLE.

\begin{figure}
    \centering
    \caption{MCEM estimates of $p$ and $q$ for the blood type example, based on the method of \citet{Cha95}. The vertical line shows the end of the pilot study. The horizontal dashed lines correspond to maximum likelihood estimates.}
    \includegraphics[width = 0.75\textwidth]{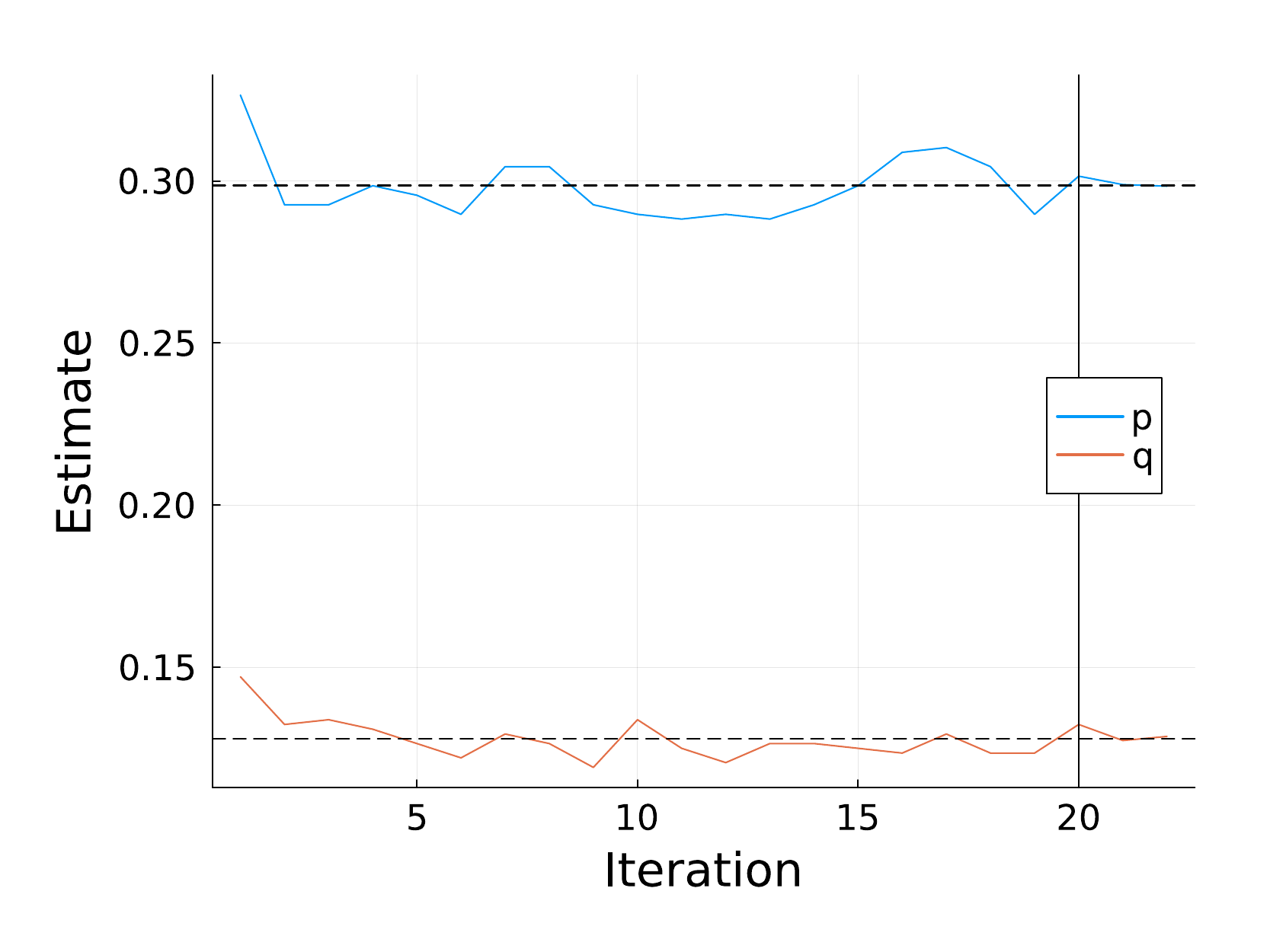} 
    \label{fig:blood_CL_MCEM_estimates}
\end{figure}

\begin{figure}
    \centering
    \caption{Estimated observed data log-likelihood ratio, based on the method of \citet{Cha95}. Red dashed lines give $95\%$ pointwise confidence bands, and the vertical line shows the end of the pilot study.}
    \subfloat[Full trajectory]{\includegraphics[width = 0.75\textwidth]{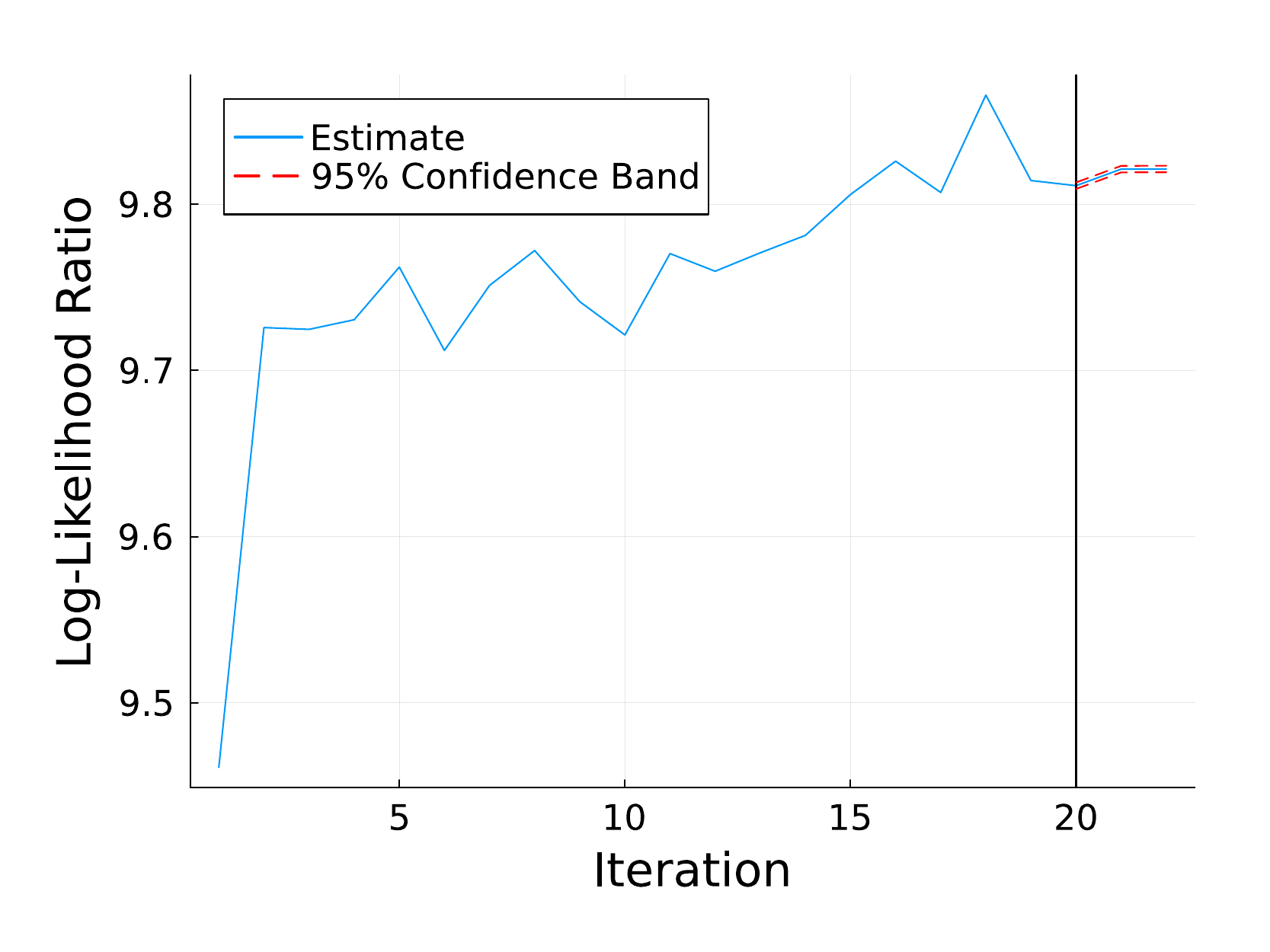} \label{fig:blood_CL_MCEM_likelihoods_full}}

    \subfloat[Post-pilot study iterations only]{\includegraphics[width = 0.75\textwidth]{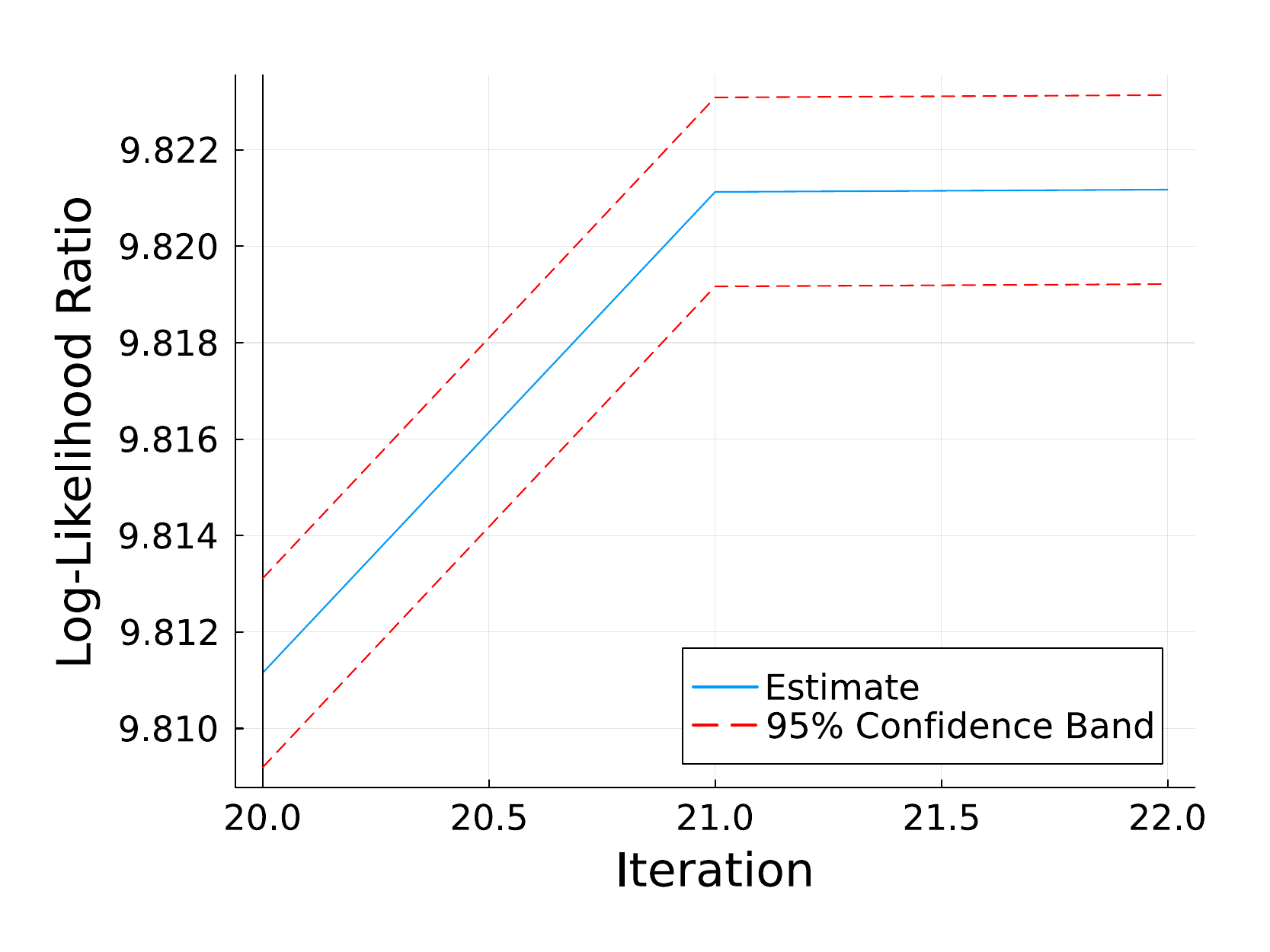} \label{fig:blood_CL_MCEM_likelihoods_zoomed}}   
    
    \label{fig:blood_CL_MCEM_likelihoods}
\end{figure}

% \begin{figure}
%     \centering
%     \caption{MCEM estimates of $p$ and $q$ for the blood type example, based on the method of \citet{Cha95}. The vertical line shows the end of the pilot study. The horizontal dashed lines correspond to maximum likelihood estimates.}
%     \label{fig:blood_CL_MCEM_traj}
%     \subfloat[Estimate trajectories]{\includegraphics[width = 0.8\textwidth]{plots/Blood_Type/Chan_Ledolter_Traj.pdf} \label{fig:blood_CL_MCEM_estimates}}
    
%     \subfloat[Estimated log-likelihood ratio trajectory]{\includegraphics[width = 0.8\textwidth]{plots/Blood_Type/Chan_Ledolter_lik_ratio_Traj.pdf} \label{fig:blood_CL_MCEM_likelihoods}}   
% \end{figure}

\subsection{Uncertainty Quantification for the Parameter Estimate (\citealp{Boo99})}

 \citet{Boo99} use a somewhat different approach from either \citet{Wei90} or \citet{Cha95} to understand the behaviour of the MCEM algorithm. The method of \citeauthor{Boo99} is based on quantifying Monte Carlo uncertainty of the MCEM update as an approximation to the update which would have been made by the deterministic EM algorithm from the same starting point. They recommend starting the MCEM algorithm with a small Monte Carlo size, and adding more observations only when the parameter estimates are no longer changing discernibly across iterations. More formally, \citeauthor{Boo99} suggest building a confidence interval for the EM update based on the Monte Carlo variability of the MCEM update at each iteration. If this interval contains the previous iteration's parameter estimate, then the Monte Carlo variability is too large relative to the size of the parameter updates, in which case more samples are required. The authors then recommend assessing convergence by checking for small relative error in the parameter updates. To account for the possibility of Monte Carlo variability leading to two consecutive estimates being similar before the algorithm has `converged', they suggest stopping only after the relative error is small for three consecutive iterations.

The confidence interval used to quantify Monte Carlo uncertainty within an iteration is obtained by framing each step of the MCEM algorithm as the solution of an M-estimation problem. This allows us to inherit the desirable properties of M-estimators; specifically, asymptotic normality (see, e.g. \citealp{van98}). Following the usual M-estimator construction and assuming that the relevant regularity conditions hold, we can estimate the asymptotic variance of the MCEM parameter estimator at each iteration. Note that this standard error applies to Monte Carlo variability within an iteration; it does not measure sampling variability due to the observed data. 

More formally, If we write $\tilde{\theta}_k$ for the EM update based on $\hat{\theta}_{k-1}$ then, assuming sufficient smoothness and moment conditions, we get the following expression for the MCEM update:
\begin{align}
    \sqrt{M_k}(\hat{\theta}_k - \tilde{\theta}_k) &= - \sqrt{M_k} \left[ \nabla^2 Q(\tilde{\theta}_k|\hat{\theta}_{k-1})\right]^{-1} \left[\nabla \hq(\tilde{\theta}_k|\hat{\theta}_{k-1}) \right] + o_p(1) \label{eq:th_as_dist}
\end{align}
as $M_k \rightarrow \infty$, where $M_k$ is the Monte Carlo size used to compute $\hat{\theta}_k$, $\nabla$ denotes differentiation with respect to the left argument of $Q$ or $\hat{Q}$ and $o_p(1)$ is a sequence which converges in probability to zero. Note that $\hat{\theta}_{k-1}$ is held fixed (analysis of an MCEM update is done conditional on the previous iteration's estimate). 

The first expression on the right-hand side of (\ref{eq:th_as_dist}) is the inverse Hessian of the EM objective function (fixed) while the second is the gradient of the MCEM objective function (an average). Thus, $\hat{\theta}_k$ is asymptotically normal with asymptotic covariance
\begin{align}
    &\left[ \nabla^2 Q(\tilde{\theta}_k|\hat{\theta}_{k-1})\right]^{-1} \bV \left[ S_c (\tilde{\theta}_k) | Y=y \right] \left[ \nabla^2 Q(\tilde{\theta}_k|\hat{\theta}_{k-1})\right]^{-1}\\
    &\approx \left[ \nabla^2 \hat{Q}(\hat{\theta}_k|\hat{\theta}_{k-1})\right]^{-1} \hat{\bE} \left[ S_c(\hat{\theta}_k) S_c(\hat{\theta}_k)^T | Y=y \right] \left[ \nabla^2 \hat{Q}(\hat{\theta}_k|\hat{\theta}_{k-1})\right]^{-1} 
\end{align}
where $S_c$ is the complete data score vector, and $\hat{\bE}$ is the Monte Carlo average over the missing data with $\hat{\theta}_k$ held fixed. Note that there is no first moment term in the conditional variance of $S_c$ because $\hat{\theta}_k$ is a maximizer of $\hat{\bE} [\ell_c (\theta) |Y=y]$.

Based on the above discussion, we can build an asymptotic confidence interval for $\tilde{\theta}_k$. Recall that \citeauthor{Boo99} recommend checking whether this interval contains $\hat{\theta}_{k-1}$ and, if so, increasing the Monte Carlo size for the next iteration. Specifically, they suggest starting the next iteration with a sample of size $M_{k+1} = M_k ( 1 + 1/r)$, with $r = 3,4$ or $5$ working well in their examples.

To assess convergence of the MCEM algorithm, \citeauthor{Boo99} present two criteria. The first is a familiar measure of relative error for parameter estimates between consecutive iterations:
\begin{align}
    \max_j \left( \frac{\left| \hat{\theta}_{k, j} - \hat{\theta}_{k-1,j} \right|}{\left| \hat{\theta}_{k-1,j} \right| + \delta_1} \right) < \delta_2 \label{eq:Boo99_tol}
\end{align}
where $\delta_1$ and $\delta_2$ are small positive constants, and the subscript $j$ ranges over components of $\theta$. \citeauthor{Boo99} suggest using $\delta_1 = 10^{-3}$ and $\delta_2$ between $2 \cdot 10^{-3}$ and $5 \cdot 10^{-3}$. See p.\ 436 of \citet{Sea06} and the references therein for a discussion of the form of equation (\ref{eq:Boo99_tol}). Recall that \citeauthor{Boo99} suggest terminating only when condition (\ref{eq:Boo99_tol}) is satisfied for three consecutive iterations.

Alternatively, since \citeauthor{Boo99} apply their method to the analysis of generalized linear mixed models, where pathologies may arise due to parameter estimates being too close to a boundary, they propose a second stopping rule:
\begin{align}
    \max_j \left( \frac{\left| \hat{\theta}_{k, j} - \hat{\theta}_{k-1,j} \right|}{\mathrm{SE}\left(\hat{\theta}_{k,j}\right) + \delta'_1} \right) < \delta'_2 \label{eq:Boo99_tol2}
\end{align}
where $\delta_1'$ and $\delta_2'$ are tolerances which may or may not differ from $\delta_1$ and $\delta_2$, and $\mathrm{SE}\left(\hat{\theta}_{k,j}\right)$ is the standard error at iteration $k$. The purpose of condition (\ref{eq:Boo99_tol2}) is to detect when estimated variance components are very close to zero, whereupon the numerical precision needed to satisfy condition (\ref{eq:Boo99_tol}) requires a prohibitive amount of computation.

\citet{Rip02} propose a modification to the method of \citeauthor{Boo99}. This new version uses the same stopping rule, but also uses the relative step lengths in equation \ref{eq:Boo99_tol} to determine when to increase the Monte Carlo size. Specifically, \citeauthor{Rip02} suggest computing the coefficient of variation (i.e., standard deviation divided by mean) of the relative step lengths (left-hand side of equation \ref{eq:Boo99_tol}) from the previous three iterations. If this coefficient of variation is larger than the one from the previous iteration, then we increase the Monte Carlo size for our next iteration. That is, when deciding whether to increase the Monte Carlo size at, e.g., iteration 6, we compute the coefficient of variation of the relative step lengths at iterations 3, 4 and 5. We then compare this to the coefficient of variation obtained from relative step lengths at iterations 2, 3 and 4. If the former coefficient of variation is larger, then we must increase the Monte Carlo size for iteration 6.

We apply the method of \citeauthor{Boo99} to our blood type example, with the settings recommended in their paper. Specifically, they suggest setting $\alpha = 0.25$, $k = 3$ (alternatively, $4$ or $5$), $\delta_1 = 0.001$, and $\delta_2 = 0.002$ (or as high as $0.005$). We also start with a Monte Carlo size of 10. Figure \ref{fig:blood_BH_traj} gives trajectories of the MCEM estimates, as well as the Monte Carlo size used to obtain each of these estimates. Note how the trajectories stabilize around the MLE as MC size increases. The final estimate from this method is $\hat{p} = 0.299$, $\hat{q} = 0.128$ which is very close to the MLE. The Monte Carlo fluctuations seen in Figure \ref{fig:blood_BH_traj} are also much smaller than the statistical uncertainty given in equation (\ref{eq:obs_lik_SE}).

\begin{figure}
    \centering
    \caption{Trajectory of estimates for $p$ and $q$, as well as Monte Carlo sample sizes, from the method of \citeauthor{Boo99}. Horizontal dashed lines give the maximum likelihood estimates.}
    \label{fig:blood_BH_traj}
    \includegraphics[width=0.75\textwidth]{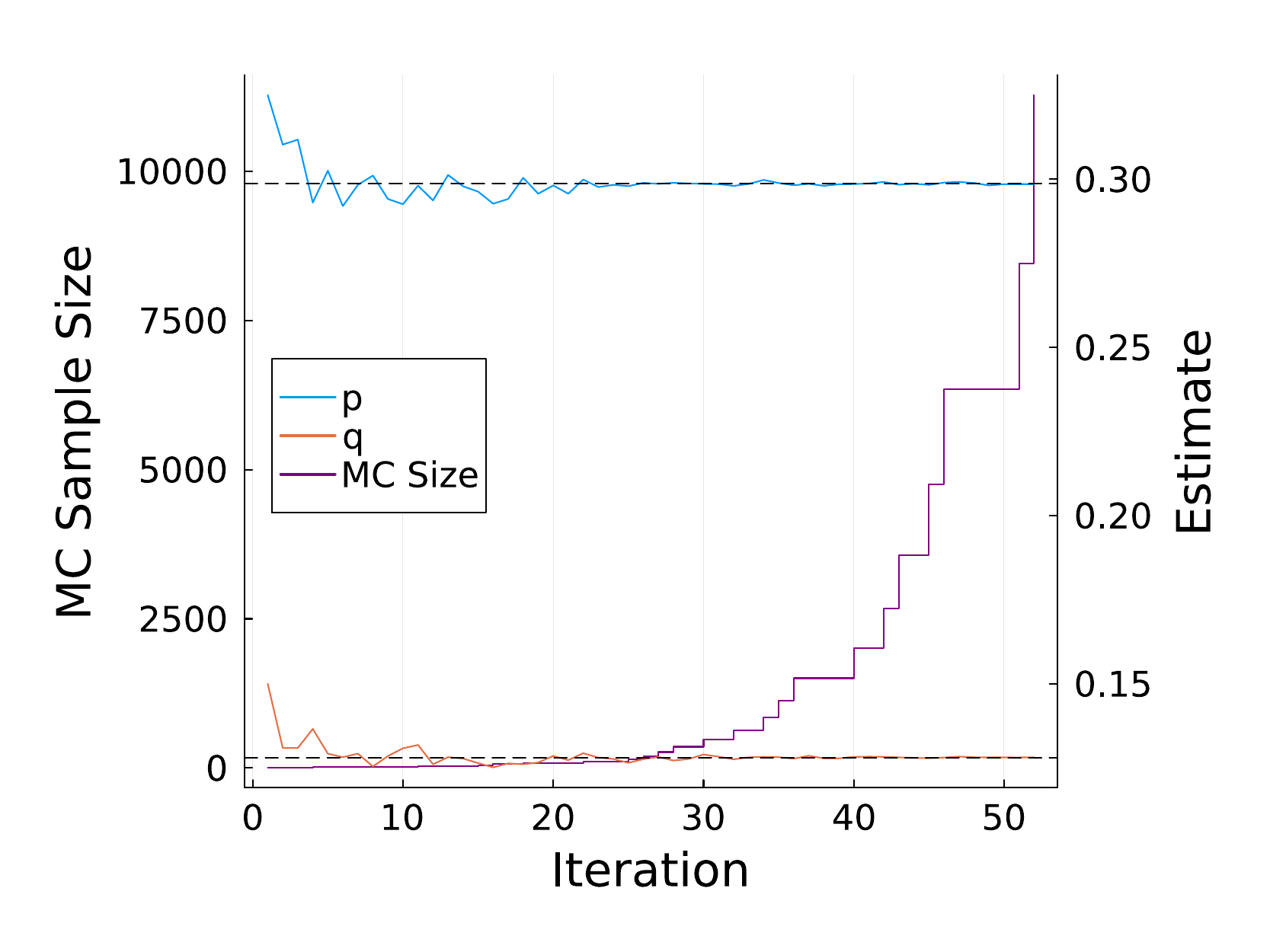}   
\end{figure}

\subsection{Uncertainty Quantification for the Objective Function \citep{Caf05}}
\label{sec:AMCEM}

The approach of \citet{Caf05} is similar in spirit to that of \citet{Boo99}. Both methods quantify Monte Carlo uncertainty in the MCEM algorithm as an approximation to the EM algorithm. The difference is that where \citeauthor{Boo99} measure uncertainty in the parameter estimates, \citeauthor{Caf05} focus on uncertainty in the objective function. Specifically, \citeauthor{Caf05} base their analysis on asymptotic normality of the MCEM increment using the following:
\begin{proposition}
    \label{thm2:Caf_normality}
    Let $\Delta \hat{Q}(\hat{\theta}_k|\hat{\theta}_{k-1}) = \hat{Q}(\hat{\theta}_{k-1}|\hat{\theta}_{k-1}) - \hat{Q}(\hat{\theta}_k|\hat{\theta}_{k-1})$. Define $\Delta Q(\hat{\theta}_k|\hat{\theta}_{k-1})$ similarly. Let $M_k$ be the Monte Carlo size at iteration $k$. Then
    \begin{align}
        \sqrt{M_k} \left[ \Delta \hat{Q}(\hat{\theta}_k|\hat{\theta}_{k-1}) - \Delta Q(\hat{\theta}_k|\hat{\theta}_{k-1}) \right] \rightsquigarrow N(0, \Sigma_k)
    \end{align}
    as $M_k \rightarrow \infty$, where $\Sigma_k$ is an asymptotic covariance matrix.
\end{proposition}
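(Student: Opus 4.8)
The plan is to work throughout conditionally on the previous estimate $\hat{\theta}_{k-1}$, treating it as a fixed parameter, so that $X_1, \ldots, X_{M_k}$ are iid draws from the missing data distribution with parameter $\hat{\theta}_{k-1}$ and $\hat{Q}(\theta|\hat{\theta}_{k-1}) = M_k^{-1} \sum_{i=1}^{M_k} \ell_c(\theta; y, X_i)$ is an ordinary sample average. First I would introduce the centred process $D(\theta) := \hat{Q}(\theta|\hat{\theta}_{k-1}) - Q(\theta|\hat{\theta}_{k-1})$ and observe that, because the two $\hat{Q}$ terms and the two $Q$ terms in $\Delta\hat{Q} - \Delta Q$ pair up by argument, the target collapses to a difference of $D$ evaluated at two points:
\begin{align}
    \sqrt{M_k}\left[ \Delta \hat{Q}(\hat{\theta}_k|\hat{\theta}_{k-1}) - \Delta Q(\hat{\theta}_k|\hat{\theta}_{k-1}) \right] &= \sqrt{M_k}\left[ D(\hat{\theta}_{k-1}) - D(\hat{\theta}_k) \right]. \nonumber
\end{align}
The term $D(\hat{\theta}_{k-1})$ is evaluated at the fixed point $\hat{\theta}_{k-1}$ and is harmless; the difficulty is entirely in $D(\hat{\theta}_k)$, whose argument is random and, crucially, depends on the very same Monte Carlo sample that defines $D$.

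Next I would remove this dependence by linearising. Let $\tilde{\theta}_k = M(\hat{\theta}_{k-1})$ be the deterministic EM update (the maximizer of $Q(\cdot|\hat{\theta}_{k-1})$). The M-estimation asymptotics already recorded in equation (\ref{eq:th_as_dist}) supply both consistency, $\hat{\theta}_k \to \tilde{\theta}_k$, and the rate $\hat{\theta}_k - \tilde{\theta}_k = O_p(M_k^{-1/2})$. A Taylor expansion of $D$ about $\tilde{\theta}_k$ gives
\begin{align}
    \sqrt{M_k}\, D(\hat{\theta}_k) &= \sqrt{M_k}\, D(\tilde{\theta}_k) + \sqrt{M_k}\, \nabla D(\tilde{\theta}_k)^T (\hat{\theta}_k - \tilde{\theta}_k) + \sqrt{M_k}\, O_p\!\left( \| \hat{\theta}_k - \tilde{\theta}_k \|^2 \right). \nonumber
\end{align}
Since $\nabla D(\tilde{\theta}_k)$ is a centred average of iid gradients evaluated at the fixed point $\tilde{\theta}_k$, it is itself $O_p(M_k^{-1/2})$ by the pointwise central limit theorem; multiplied by $\hat{\theta}_k - \tilde{\theta}_k = O_p(M_k^{-1/2})$ and rescaled by $\sqrt{M_k}$, the middle term is $o_p(1)$, and the quadratic remainder is likewise $o_p(1)$. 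Hence $\sqrt{M_k}\, D(\hat{\theta}_k) = \sqrt{M_k}\, D(\tilde{\theta}_k) + o_p(1)$, and the random argument is replaced by the deterministic $\tilde{\theta}_k$. I expect this linearisation to be the main obstacle: one must justify the expansion under the stated smoothness, and the bound on the middle term relies only on the two factors each being $O_p(M_k^{-1/2})$, so their (in general non-trivial) correlation through the shared sample does not matter. A reader preferring empirical-process language could instead invoke stochastic equicontinuity of the class $\{ \ell_c(\theta; y, \cdot) \}$, but the Taylor route is cleaner given the smoothness assumed throughout.

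Finally, with both arguments deterministic, I would apply the ordinary central limit theorem to the iid summands $Z_i = \ell_c(\hat{\theta}_{k-1}; y, X_i) - \ell_c(\tilde{\theta}_k; y, X_i)$, whose finite variance is guaranteed by the finiteness assumption on the expectations. This yields
\begin{align}
    \sqrt{M_k}\left[ D(\hat{\theta}_{k-1}) - D(\tilde{\theta}_k) \right] &= \sqrt{M_k}\left( \frac{1}{M_k} \sum_{i=1}^{M_k} Z_i - \bE[Z_1] \right) \rightsquigarrow N(0, \Sigma_k), \nonumber
\end{align}
where $\Sigma_k = \bV_{\hat{\theta}_{k-1}}[\ell_c(\hat{\theta}_{k-1}; y, X) - \ell_c(\tilde{\theta}_k; y, X) \mid Y = y]$ is the (scalar) limiting variance. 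Combining this with the linearisation via Slutsky's theorem gives the stated convergence. For practical estimation one replaces the unknown $\tilde{\theta}_k$ by $\hat{\theta}_k$ and the conditional variance by its Monte Carlo analogue, which is valid since $\hat{\theta}_k \to \tilde{\theta}_k$.
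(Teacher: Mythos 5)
Your proposal is correct in substance, but be aware that the paper itself does not prove Proposition \ref{thm2:Caf_normality} at all: it defers entirely to \citet{Caf05} ``for hypotheses and a proof sketch,'' so there is no internal proof to compare against, and your argument is a genuine self-contained reconstruction. Its skeleton is the standard one: the exact cancellation $\Delta\hat{Q} - \Delta Q = D(\hat{\theta}_{k-1}) - D(\hat{\theta}_k)$, followed by linearization of $D$ around the deterministic EM update $\tilde{\theta}_k$ using the $O_p(M_k^{-1/2})$ rate already recorded in equation (\ref{eq:th_as_dist}), and a classical CLT for the iid summands $Z_i$; your remark that the cross term needs no control of the dependence between $\nabla D(\tilde{\theta}_k)$ and $\hat{\theta}_k - \tilde{\theta}_k$ (a product of two $O_p(M_k^{-1/2})$ factors is $O_p(M_k^{-1})$ regardless of correlation) is exactly right. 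Two caveats are worth stating explicitly rather than folding into ``smoothness.'' First, the quadratic remainder bound needs $\sup_\theta \| \nabla^2 D(\theta) \| = O_p(1)$ over a neighbourhood of $\tilde{\theta}_k$, i.e.\ a uniform law of large numbers for $\nabla^2 \ell_c$, and the centering of $\nabla D(\tilde{\theta}_k)$ needs the differentiation--integration exchange assumed elsewhere (Proposition \ref{thm2:EM_decomp}). Second, your proof covers iid sampling only, with $\Sigma_k$ the scalar conditional variance of $\ell_c(\hat{\theta}_{k-1}; y, X) - \ell_c(\tilde{\theta}_k; y, X)$ (the increment is scalar, so ``covariance matrix'' in the statement is really a variance); the result as actually used in Section \ref{sec:AMCEM} must also hold under importance sampling and MCMC, where your decomposition still works but the last step requires a delta method for self-normalized ratios or a Markov chain CLT in place of the classical CLT --- this generality is precisely why the paper leaves the hypotheses and $\Sigma_k$ abstract. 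What your route buys in exchange is an explicit, estimable formula for $\Sigma_k$ in the iid case, which is what one needs to implement the confidence bounds the proposition exists to justify.
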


See \citeauthor{Caf05} for hypotheses and a proof sketch. Provided that we are able to estimate $\Sigma_k$, Proposition \ref{thm2:Caf_normality} allows us to build asymptotic confidence intervals for the EM increment, $\Delta Q$. Recall that in Section \ref{sec:GEM}, we defined the Generalized EM algorithm by requiring that $\Delta Q \geq 0$, and showed that this requirement guarantees the ascent property. While the stochastic nature of the MCEM algorithm makes it impossible to guarantee that the EM increment is positive, we are able to use Proposition \ref{thm2:Caf_normality} to construct asymptotic confidence bounds for $\Delta Q$. If a lower confidence bound for $\Delta Q$ is positive, then we can be reasonably confident that the true value of $\Delta Q > 0$.

Estimating the asymptotic variance under direct or rejection sampling is fairly straightforward. Importance sampling however, is somewhat more complicated; particularly when a normalizing constant must be estimated (see Section \ref{sec:imp_samp} for more on importance sampling). \citeauthor{Caf05} give a Delta Method-based formula for estimating $\Sigma_k$ under importance sampling \citep[see Chapter 3 of][for an overview of the Delta Method]{van98}. They also give some guidance for calculating standard errors based on Markov chain Monte Carlo sampling, which we do not go into here. See Section \ref{sec:MCMC} for more details on Markov chain Monte Carlo sampling. 

We now return to the key MCEM problems of choosing the Monte Carlo size and when to terminate. For the former, \citeauthor{Caf05} advise constructing a lower confidence limit for the EM increment, $\Delta Q$. If this confidence limit is positive, then we proceed to the next iteration. If not, then we augment the Monte Carlo sample at the current iteration (with, say, $M_k/r$ new points, for some small positive integer, $r$, as in \citealp{Boo99}), and compute a new confidence bound. At the next iteration, \citeauthor{Caf05} advise using a starting Monte Carlo sample which is at least as large as the final sample from the previous iteration. In fact, we may find that a larger sample should be used based on extrapolation of Monte Carlo variability from the previous iteration. The paper gives a formula to check for whether we should increase the Monte Carlo size before starting the next iteration based on a normal approximation to increments in the MCEM objective function. In our sample analyses, increasing the Monte Carlo size between iterations is never called-for, so we omit this step from our presentation.

An important difference between the methods of \citet{Boo99} and \citeauthor{Caf05} for determining Monte Carlo size, $M$, is that the former authors immediately proceed to the next iteration after increasing $M$, whereas the latter authors continue increasing $M$ at the current iteration until its size is deemed acceptable.

\citeauthor{Caf05} base their termination criterion on stopping when there is evidence that the algorithm is no longer yielding sufficient improvement in the EM objective function. Specifically, they start by choosing a tolerance, $\tau>0$, then calculate an upper confidence limit for the EM increment at each iteration. If this upper confidence limit is below $\tau$, then we declare that there is little remaining room for improvement in the EM objective, and terminate our algorithm.

We now apply the method of \citeauthor{Caf05} to our blood type example. This method has numerous tuning parameters, and the paper gives limited guidance on how to select them. As such, we choose values which appear to work reasonably well. Specifically, we use confidence levels of $80\%$ when checking whether to augment the Monte Carlo size and $90\%$ when checking for termination. Every time we augment the Monte Carlo size at iteration $k$, we add $M_k/2$ more points. We use a tolerance level of $10^{-3}$ to check for termination, and a starting Monte Carlo size of 10. Figure \ref{fig:blood_AMCEM_traj} gives trajectories of the MCEM estimates, as well as the Monte Carlo size used to obtain each of these estimates. As with the method of \citet{Boo99}, the trajectory stabilizes as the Monte Carlo size increases. Our final estimate here is $\hat{p} = 0.299$ and $\hat{q} = 0.127$, which is very close to the MLE. As with our other methods, the Monte Carlo fluctuations here are much smaller than the statistical uncertainty given in equation (\ref{eq:obs_lik_SE}).

\begin{figure}
    \centering
    \caption{Trajectory of estimates for $p$ and $q$, as well as Monte Carlo sample sizes, from the method of \citeauthor{Caf05}. Horizontal dashed lines give the maximum likelihood estimates.}
    \label{fig:blood_AMCEM_traj}
    \includegraphics[width=0.75\textwidth]{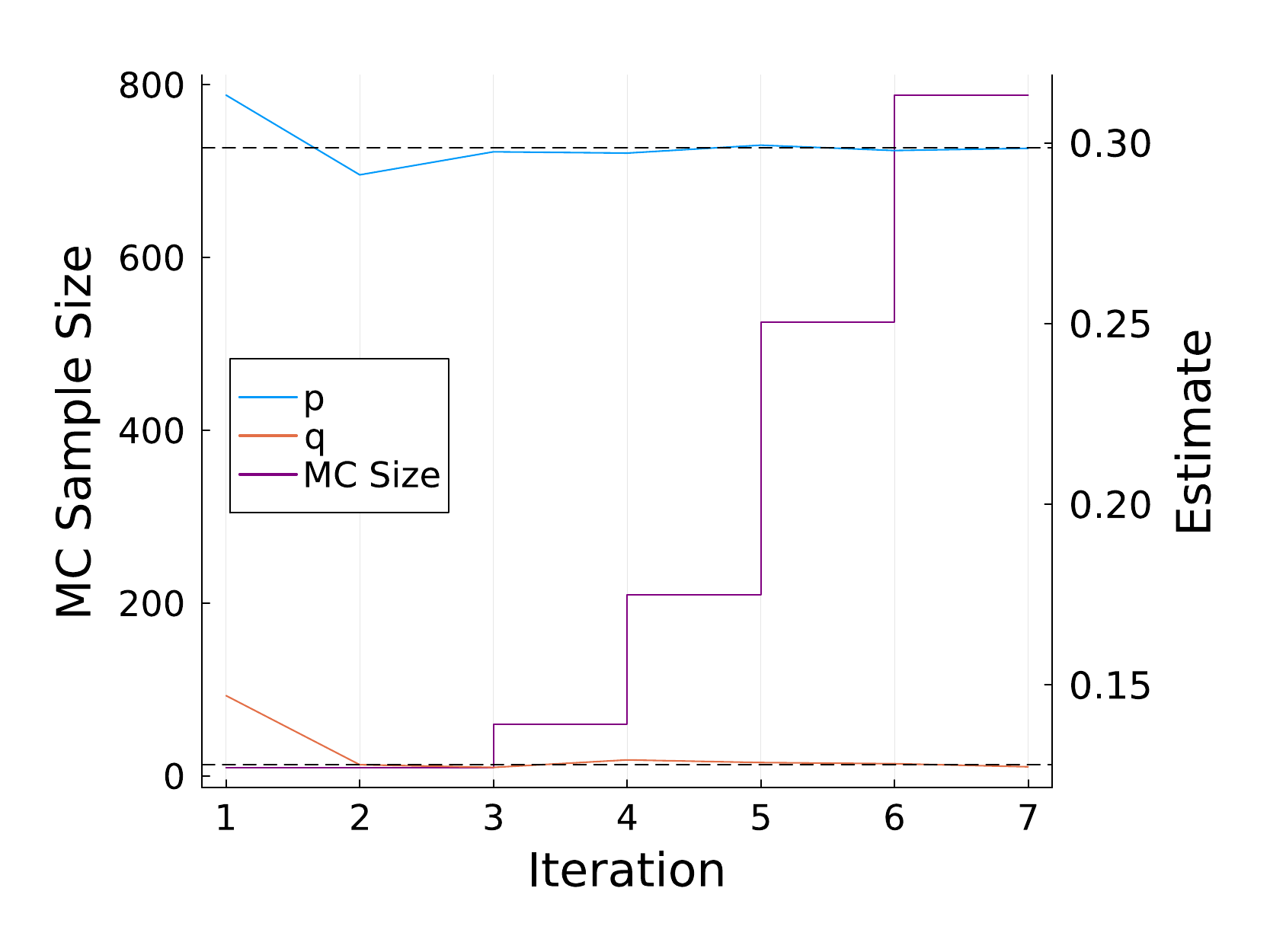}   
\end{figure}

\section{Alternatives to the MCEM Algorithm}
\label{sec:alternatives}

In this section, we outline some alternatives to the MCEM algorithm for maximizing the likelihood of an incomplete dataset. Examples include stochastic approximation \citep{Del99, Rob51, Lai03} and the Monte Carlo maximum likelihood method \citep{Gel93, Gey94}. 

There are also a few methods which are related to EM and MCEM, but have received little attention in the literature. One such method is the Monte Carlo Newton-Raphson algorithm \citep{Kuk97}, which closely resembles the ordinary Newton-Raphson algorithm, but with the gradient and Hessian replaced by Monte Carlo approximations. See our discussion of \citet{McC97} in Section \ref{sec:comparison} for a comparison with other related algorithms. Another related method is variational inference. While related to the EM algorithm, variational inference is sufficiently distinct that reviewing it is outside the scope of this work. See, e.g., \citet{Ble17} or \citet{Tsi08} for relevant review papers.

\subsection{Stochastic Approximation}
\label{sec:SAEM}

Stochastic approximation (SA) is a method originally proposed by \citet{Rob51} for finding roots of functions which can only be evaluated with noise. This method was expanded upon rapidly by, e.g., \citet{Kie52} into a method for derivative-free optimization, and by \citet{Dvo56} with a systematic theoretical framework. Since the mid-20th century, SA methods have grown into a thriving research area. See, e.g., \citet{Kus97} or \citet{Bor22} for textbook-length treatments, and \citet{Lai03} for a survey paper which focuses on applications to statistics.

The basic version of SA iteratively updates our estimate of the root, $\theta_*$, of some unobservable function, $\phi$, based on the value of a noisy realization of that function at the current estimate. Specifically, if $\theta_k$ is our estimate at iteration $k$ and $\hat{\phi} \approx \phi$, then our estimate at iteration $k+1$ is $\theta_{k+1} = \theta_k + \alpha_k \hat{\phi}(\theta_k)$, where $\alpha_k$ is a sequence which goes to zero at a particular rate. Since our sequence of weights goes to zero with $k$, the update terms become negligible in the limit and our estimate of $\theta_*$ stabilizes. The precise requirement for these weights is that $\sum_{k=1}^\infty \alpha_k = \infty$ and $\sum_{k=1}^\infty \alpha_k^2 < \infty$. A common choice is $\alpha_k = k^{-1}$. Numerous authors have studied convergence of the SA method in probability, almost surely and in $\mathcal{L}^1$ under various regularity conditions. See \citet{Lai03} for an excellent review of the history of stochastic approximation convergence theory.

Although the specific implementations of stochastic approximation are too numerous to cover here, we do give a particular application which is of great use to us. Suppose that we want to optimize a function, $f$, which we can only evaluate approximately. Assume further that we are able to approximately evaluate the gradient of $f$, $\nabla f$. Setting $\phi = \nabla f$ and running the SA algorithm with $\hat{\phi} \approx \nabla f$ gives us an approximate critical point for $f$. We can therefore use SA to optimize functions which cannot be evaluated exactly. This version of SA sees considerable use in the Machine Learning community under the name Stochastic Gradient Descent \citep{Bot10}. A related method developed by \citet{Kie52} involves approximating the gradient at each step with finite differences.

We now discuss specific applications of stochastic approximation to the missing data problem. We refer to such an application as a stochastic approximation EM (SAEM) algorithm. We turn first to the method of \citet{Gu98I}, which follows the outline presented above for using stochastic approximation-based optimization. That is, \citeauthor{Gu98I} suggest setting $\phi$ to the observed data score, $\phi(\theta) = S(\theta; y)$. This function can be estimated using Proposition \ref{thm2:EM_decomp} (i) and approximating the conditional expectation by Monte Carlo as in MCEM. Iteratively applying the stochastic approximation update formula converges to a critical point of the observed data score. In the case of a vector-valued parameter, \citeauthor{Gu98I} also recommend pre-multiplying $\hat{\phi}$ by a matrix which converges to the inverse of the observed data information matrix. Such a sequence of matrices can be constructed using Louis' Identity (Proposition \ref{thm2:info_decomp}) and Monte Carlo. Note that we use the same sample of missing data to update our estimates of the parameter and the observed data information matrix at each iteration. Similar work by \citet{Gu98II} extends the same SAEM construction to accommodate Markov chain Monte Carlo sampling \citep[see also][]{Cai10}. \citet{Gu01} discuss how to apply the above methodology to the analysis of spatial models, and incorporate a second stage to the method in which estimates are averaged across iterations (as recommended by \citealp{For03}, for MCEM). This averaging has been found to improve performance of SAEM, and of stochastic approximation more generally \citep{Pol92, Del99}, although \citet{Kuh05} report more modest findings.

\citet{Del99} present an SAEM implementation for estimation in exponential family models, in which stochastic approximation is used to estimate the EM objective function instead of working directly with $\theta$. Here, the estimate being updated at each iteration is $\tilde{Q}$, an approximation to the EM objective function, $Q$. The update term is $\hat{\phi}_k = \tilde{Q}_k - \hat{Q}_k$, where $\hat{Q}_k$ is the MCEM objective function based on the estimated value of $\theta$ from the previous iteration. Our updated parameter estimate is then obtained by maximizing over $\theta$ in the new stochastic approximation objective function, $\tilde{Q}_{k+1}(\theta) = \tilde{Q}_k(\theta) + \alpha_k [ \tilde{Q}_k(\theta) - \hat{Q}_k(\theta)]$. Note that the updated objective function can be re-written as a convex combination: $\tilde{Q}_{k+1} = (1 - \alpha_k) \tilde{Q}_k + \alpha_k \hat{Q}_k$. Since $\alpha_k \rightarrow 0$, each iteration of stochastic approximation progressively puts more weight on the pre-existing objective function and less weight on the MCEM objective.

Keen observers will note that the update formula given above does not fit exactly into the stochastic approximation framework given earlier in this section. Specifically, the formula given by \citet{Del99} does not directly update $\theta$, but instead updates $\tilde{Q}$, which depends indirectly on $\theta$, and is then used to infer an update for $\theta$. In order to re-frame the algorithm of \citeauthor{Del99} as a stochastic approximation update, we must use an approximate sufficient statistic as our estimate of $\theta$ at each iteration. We also add an asymptotically negligible bias term to our update formula (as a theoretical device). See Section 5 of \citet{Del99} and Chapter 5 of \citet{Kus03} for details.

%\citet{Jan06} builds on the work of \citet{Del99} by offering guidance on how to choose the step size, $\alpha_k$, and on assessing convergence. In order to choose the step size parameter, \citeauthor{Jan06} uses results about the convergence rate of ordinary EM to develop estimators for the bias and variance of SAEM as a function of the step size. This step size is then chosen to balance bias with variance. Convergence is assessed using a similar method to that of \citet{Caf05}, where the remaining improvement in the EM objective function is estimated at each iteration. Said differently, \citeauthor{Caf05} estimate the improvement in the EM objective function at the current iteration (relative to this iteration's starting point), and stop when an upper confidence bound for this improvement is sufficiently small. \citeauthor{Jan06} generalizes this approach by estimating the improvement across multiple iterations, along with corresponding upper confidence limits, and plotting this estimate improvement along a trajectory of estimates. Examining improvement across multiple iterations has the advantage of being able to detect premature convergence. If small-lag differences appear small by the end of our trajectory but large-lag differences do not, then our algorithm has stabilized before it has converged to the optimal estimate of $\theta$. This premature stabilization is remedied by re-starting SAEM with a larger step size parameter.

A subtly different line of research on the SAEM method has been developed by a group at the National Institute for Research in Digital Science and Technology (INRIA) in France \citep[see, e.g.,][for a review of some of their methods]{Cel95}. The goal here is to augment the EM algorithm, rather than to facilitate the application of EM-type methods when ordinary EM is intractable. More precisely, methods from this group introduce a stochastic perturbation to the EM algorithm, with the goal of escaping fixed points which are locally, but not globally, optimal. Early work centered around a Stochastic EM (SEM) algorithm \citep{Cel85}, which is equivalent to the MCEM algorithm with a Monte Carlo size of one (\citealp{Cel87,Cel95}; see also \citealp{Nie00II}). Later, they also propose a method which they refer to as SAEM (although it does not quite fit into our framework), in which each iteration consists of first computing both the EM and MCEM updates from the previous iteration's estimate, then combining these two updates in a convex combination as the estimate for the current iteration. Here, as with the SEM algorithm, the MCEM update is computed with a Monte Carlo size of one \citep{Cel92, Cel95}.

An advantage of the SAEM algorithm over MCEM is that in SAEM we choose the Monte Carlo size once at the beginning and leave it fixed for every iteration. We can think of the method as automatically increasing the MC size since the estimate at each iteration is a weighted sum of all the estimates which came before it. A disadvantage of SAEM is that it requires us to select the sequence $\{ \alpha_k \}$, commonly referred to as the ``step size''. Choosing $\alpha_k$ too large will mean the algorithm takes a long time to stabilize, while choosing $\alpha_k$ too small causes the algorithm to stabilize before it reaches its limiting value (and will therefore take a long time to reach this limit). \citet{Jan06} gives some guidance on choosing this step size based on the goal of balancing bias with variance. \citeauthor{Jan06} also presents a convergence diagnostic based on the ideas of \citet{Caf05} which allows for a posteriori assessment of whether the step size was too small. 

We illustrate the SAEM methods of \citet{Gu98I} and \citet{Del99} on our blood type example. For both methods, we use $\alpha_k = k^{-0.7}$ (balancing step length with Monte Carlo variability), a Monte Carlo sample size of 10 at each iteration, and 50 iterations. Since our parameters are constrained to lie between 0 and 1, we apply the method of \citeauthor{Gu98I} on logit-scale, then back-transform before plotting\footnote{This logit transformation is not necessary for the method of \citet{Del99}, since there is always a maximizer of the estimated objective function which satisfies our parameter constraints.}. See Figure \ref{fig:blood_SAEM_traj} for trajectories from both SAEM methods. The final estimate from the \citeauthor{Gu98I} method is $\hat{p} = 0.291$ and $\hat{q} = 0.127$, while the \citeauthor{Del99} method gives $\hat{p} = 0.301$ and $\hat{q} = 0.128$. 

\begin{figure}
    \centering
    \caption{Trajectory of estimates for $p$ and $q$ from two versions of stochastic approximation. Horizontal dashed lines give the maximum likelihood estimates.}
    \subfloat[\citet{Gu98I}]{\includegraphics[width = 0.75\textwidth]{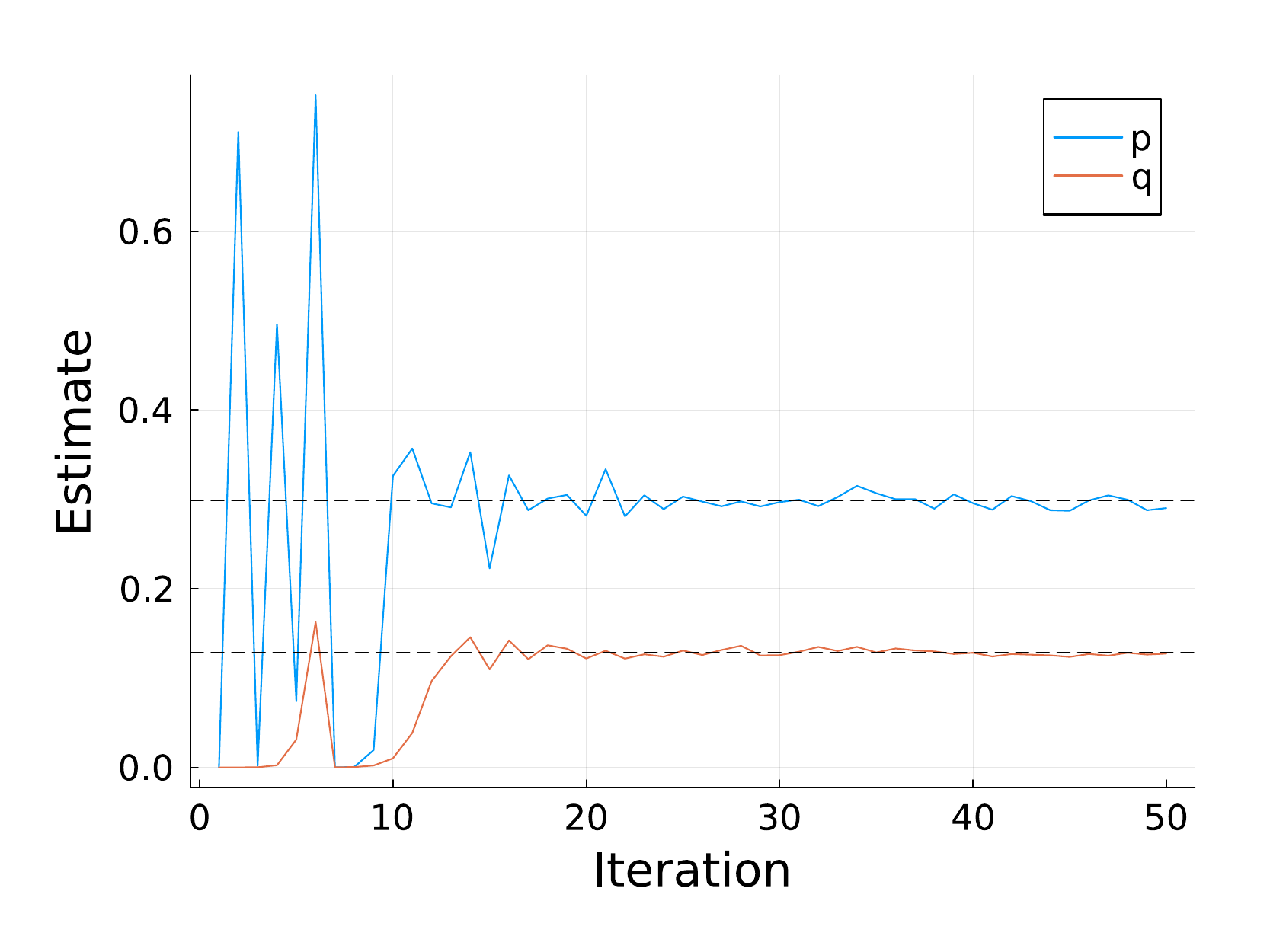} \label{fig:blood_SAEM_traj_score}}

    \subfloat[\citet{Del99}]{\includegraphics[width = 0.75\textwidth]{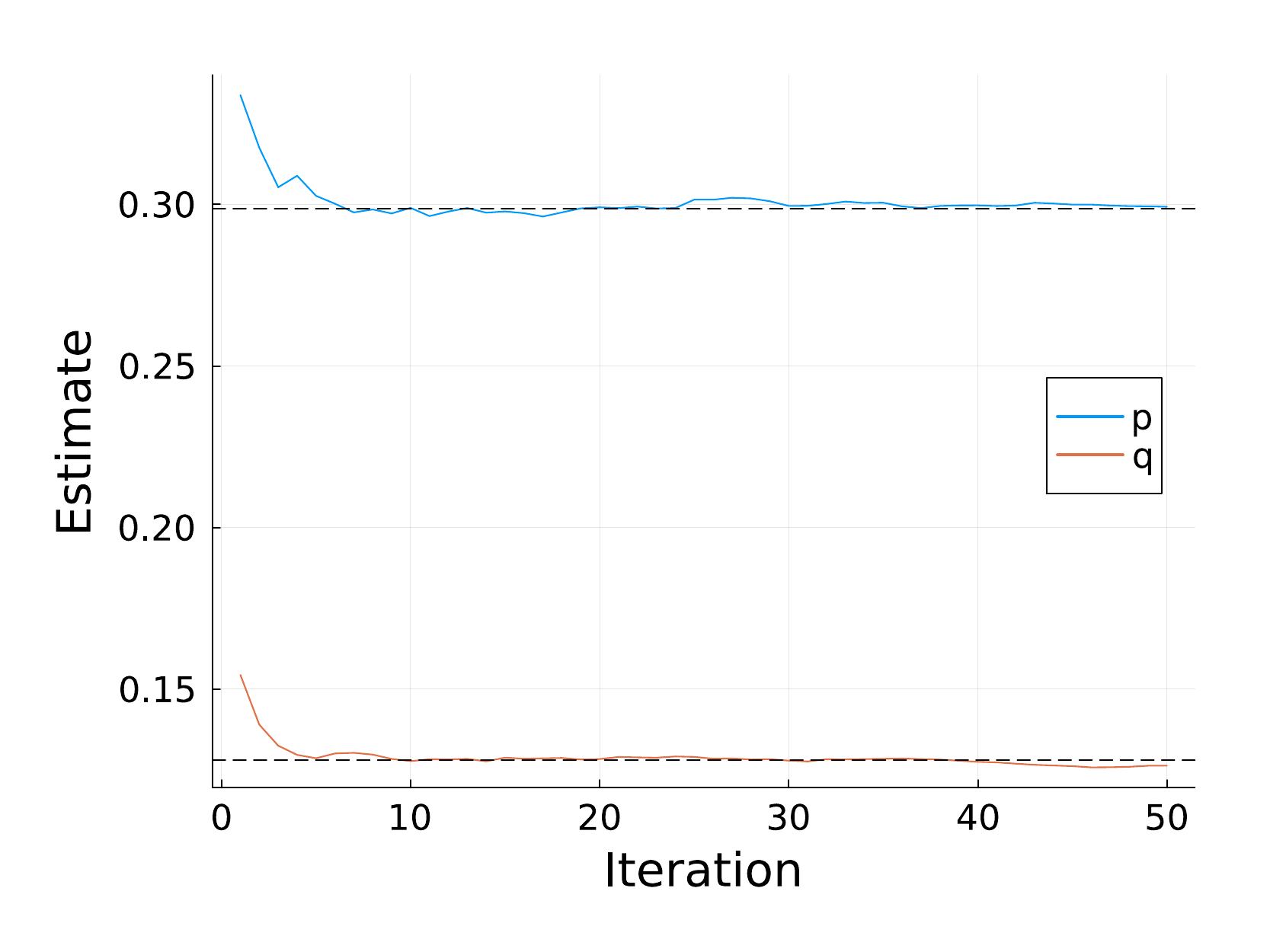} \label{fig:blood_SAEM_traj_obj_fun}}   
    
    \label{fig:blood_SAEM_traj}
\end{figure}

Both SAEM estimates are close to the MLE, but differ more than the MCEM methods presented in Section \ref{sec:MCEM}. It is worth noting that the number of Monte Carlo samples used to compute the final estimate under SAEM is comparable to that used by many MCEM methods. For SAEM however, this is the total number of Monte Carlo draws, whereas MCEM requires this many simulations at every iteration. Therefore, the total number of Monte Carlo draws used by SAEM is much lower than for MCEM.

\subsection{Monte Carlo Maximum Likelihood}
\label{sec:MCML}

The Monte Carlo maximum likelihood (MCML) method was developed to handle settings where the likelihood function cannot be evaluated exactly. The idea is to approximate the whole likelihood surface (up to an additive constant) using a single Monte Carlo sample. This approximate likelihood is then maximized, either numerically or analytically. We begin MCML by choosing a fixed reference parameter value, $\theta_*$, and estimate likelihood ratios relative to $\theta_*$. The likelihood ratio can be written as an expectation with respect to the fixed reference value, and this expectation is approximated by Monte Carlo. A key feature of this methodology is that a single Monte Carlo sample can be re-used to estimate the likelihood ratio at any number of target parameter values. Finally, we maximize our estimated likelihood ratio as a proxy of the unknown likelihood function.

The most basic form of Monte Carlo maximum likelihood \citep{Gey91} applies when we only know the likelihood up to a normalizing constant, say $f(y; \theta) = h(y; \theta) / c(\theta)$, where $h$ is known but $c$ is not. We note that $\int h(y; \theta) dy = c(\theta)$, and write
\begin{align}
    \log \frac{f(y; \theta)}{f(y; \theta_*)} &= \log \frac{h(y; \theta)}{h(y; \theta_*)} - \log \frac{c(\theta)}{c(\theta_*)}\\
    &= \log \frac{h(y; \theta)}{h(y; \theta_*)} - \log \int \frac{h(y; \theta)}{c(\theta_*)} dy\\
    &= \log \frac{h(y; \theta)}{h(y; \theta_*)} - \log \int \frac{h(y; \theta)}{h(y; \theta_*)} f(y; \theta_*) dy\\
    &= \log \frac{h(y; \theta)}{h(y; \theta_*)} - \log \bE_{\theta_*} \frac{h(y; \theta)}{h(y; \theta_*)}\\
    & \approx \log \frac{h(y; \theta)}{h(y; \theta_*)} - \log \frac{1}{M} \sum_{i=1}^M \frac{h(y_i; \theta)}{h(y_i; \theta_*)} \label{eq:MCML_simp}
\end{align}
where the $y_i$ are sampled iid from $f(y; \theta_*)$. The second term in line (\ref{eq:MCML_simp}) may need to be modified if non-iid sampling is used.

An alternative formulation of the MCML method given by \citet{Gel93} is more relevant for use with missing data \citep[see also][]{Gey94}. A similar derivation to the one given above shows that, in the presence of missing data, $X$,
\begin{align}
    \log \frac{f(y; \theta)}{f(y; \theta_*)} &= \log \bE_{\theta_*} \left[ \left. \frac{h(y, X; \theta)}{h(y, X; \theta_*)} \right| Y=y \right] - \log \bE_{\theta_*} \left[\frac{h(Y, X; \theta)}{h(Y, X; \theta_*)}\right] \label{eq:MCML_miss_exact}\\
    & \approx \log \left[\frac{1}{M_1} \sum_{i=1}^{M_1} \frac{h(y, X'_i; \theta)}{h(y, X'_i; \theta_*)}\right] - \log \left[\frac{1}{M_2} \sum_{j = 1}^{M_2} \frac{h(Y_j, X_j; \theta)}{h(Y_j, X_j; \theta_*)}\right] \label{eq:MCML_miss}
\end{align}
where the $X'_i$ are generated from the conditional distribution of $X|Y=y$, the $(Y_j, X_j)$ pairs are generated from the joint distribution of $Y$ and $X$, and $h(y, x; \theta)$ is proportional to this joint distribution. Note that two Monte Carlo samples are required to evaluate (\ref{eq:MCML_miss}). If the complete data likelihood is known exactly, then we can replace $h$ with $f$ in (\ref{eq:MCML_miss_exact}) and (\ref{eq:MCML_miss}), and drop the second term (i.e.\ the one being subtracted).

The MCML procedure closely resembles importance sampling, with $f$, $f_c$ or $f_m$ evaluated at the fixed $\theta_*$ being used as proposal for sampling from the same distribution with arbitrary $\theta$ (see Section \ref{sec:imp_samp} for a discussion of importance sampling). \citet{Jan03} take this idea further, and suggest directly estimating $f(y; \theta)$ from the complete data density, $f_c(y, x; \theta)$, using importance sampling. That is, they write $f(y; \theta) = \int f_c(y, x; \theta) dx = \int [ f_c(y, x; \theta) / g(x)] g(x) dx$, where $g$ is an arbitrary density function. The final integral is then approximated by an average of samples drawn iid from $g$. While the method of \citeauthor{Jan03} does present an interesting direction in which to generalize MCML, note that it is only applicable if $f_c$ is known exactly (or, rather, up to a proportionality constant which does not depend on $\theta$).

We apply the MCML method of \citet{Gey94} on our blood type example. Using a Monte Carlo size of 1000 gives $\hat{p} = 0.298$ and $\hat{q} = 0.129$, which are both quite close to the MLE. Comparisons with other methods have found that MCML is very sensitive to its starting point; see Section \ref{sec:comparison} or \citet{McC97}. We therefore repeat MCML, this time starting with the estimate obtained from its first application. Our second run gives $\hat{p} = 0.297$ and $\hat{q} = 0.128$, which is not appreciably closer to the MLE.

\section{Simulation}
\label{sec:simulation}

An obstacle to implementing the MCEM algorithm which was not addressed in Section \ref{sec:MCEM}, and which is also relevant to the methods described in Section \ref{sec:alternatives}, is how to generate the necessary Monte Carlo samples. It is in general a hard problem to simulate from arbitrary conditional distributions. In this section, we discuss a few methods for simulating the required observations at each step of the MCEM algorithm. All the topics that we cover here have their own bodies of literature, which we cannot hope to cover in their entirety. We instead give only a brief overview, focusing on aspects which are particularly relevant to use with the MCEM algorithm, and direct the reader to other, more focused, reviews.

We start by discussing importance sampling, which is a method for using a sample from one distribution to estimate moments of another distribution. Next, we cover Markov chain Monte Carlo (MCMC) sampling, which consists of constructing and sampling from a Markov chain whose stationary distribution matches the distribution from which we wish to simulate. Finally, we briefly touch on rejection sampling, sequential Monte Carlo (SMC) and quasi-Monte Carlo (QMC). We focus less on these last three methods because, to our knowledge, they have not been as widely used in the MCEM literature.

\citet{Del99} give some brief guidance on selecting between importance sampling and MCMC. \citet{Jan06II} goes into more detail about the advantages and disadvantages of these samplers in the context of MCEM. See also \citet{Rob04} for a textbook-length treatment of Monte Carlo methods.

\subsection{Importance Sampling}
\label{sec:imp_samp}

Broadly speaking, importance sampling is a framework for approximating intractable expectations. A typical use case is when we want to evaluate the expected value of some function, $h$, under a distribution $\bF$, and this expectation is not just analytically intractable, but the distribution $\bF$ is impossible (or impractical) to sample from. The latter restriction prevents us from using ordinary Monte Carlo integration. Instead, we select another distribution, $\bG$, which is easier to work with, and observe that $\bF h = \bG [h \cdot (f/g)]$, where $f$ and $g$ are the densities of $\bF$ and $\bG$ respectively. Provided that $\bG$ is easy to sample from, we can estimate this alternative expression for our target expectation via Monte Carlo integration with samples drawn from $\bG$. We call $\bF$ the target distribution and $\bG$ the proposal distribution.  

A classic reference on importance sampling and other Monte Carlo methods is the book by \citet{Rob04}; particularly Chapters 3 and 4. Chapter 8 of the book by \citet{Cho20} gives a more current overview of importance sampling, with a focus on its application to SMC. \citet{Elv22} give a survey of modern methods for extending the importance sampling framework, with an emphasis on two main approaches: multiple importance sampling and adaptive importance sampling. See \citet{Elv19} for a review of multiple importance sampling methods, and \citet{Bug17} for more on adaptive importance sampling. \citet{Aga17} give a survey paper level treatment of some more theoretical aspects of importance sampling.

In the rest of this section, we describe a few simple modifications which can ease implementation and improve performance when using importance sampling with the MCEM algorithm.

When using importance sampling with the MCEM algorithm, our target distribution is the missing data distribution (i.e.\ the conditional distribution of the missing data given the observed data). In some settings, this distribution may be difficult to describe exactly. However, the integrand is the (log-)likelihood of the complete data distribution, so it is reasonable to expect that we can evaluate this complete data density. From the definition of conditional probability, the missing data density is proportional to the complete data density, provided that we treat the latter as a function of the missing data and hold the observed data fixed. The proportionality constant here is the observed data density, so it is unlikely that we will be able to normalize the missing data density exactly (otherwise, we could just work directly with the observed data likelihood). 

A simple modification of importance sampling, which circumvents the need for exact normalization, is to compute importance weights using un-normalized densities, then normalize them to sum to one. This is referred to as ``auto-normalized'', or ``self-normalized'', importance sampling \citep[see, e.g.,][]{Elv22}. The reason self-normalized importance sampling works is that the unknown normalizing constant cancels in the numerator and denominator of our normalized weights. In fact, our proposal distribution can also be un-normalized, and the ratio of the two normalizing constants cancels when we normalize our weights.

There are, however, disadvantages of self-normalized importance sampling compared to the exact importance sampling. One important limitation is that our estimator of $\bF h$ is no longer unbiased, and is instead only asymptotically unbiased. In fact, as a ratio estimator, the standard error of a self-normalized importance sampling estimator is often difficult to obtain for finite samples (although an asymptotic formula is readily available). This is fine in problems where it is easy to sample from our proposal distribution, $\bG$, but in high dimensional problems for example, even simulating from $\bG$ may be costly, and more care must be taken with the discrepancy between asymptotic results and finite-sample behaviour.

It is well-known that the performance of an importance sampling estimator depends on how closely the proposal distribution matches the target \citep{Rob04}. One thing that can go wrong is if the importance weight (i.e.\ the likelihood ratio between these two distributions) does not have sufficiently many finite moments \citep{Aga17}. A simple modification to our importance weights which guarantees infinitely many finite moments is to specify a threshold value, and truncate any weights which fall above the threshold (i.e., set weights which fall above this threshold equal to the threshold value). This ``truncated importance sampling'' method is proposed and analysed by \citet{Ion08}. Two strategies are proposed in this work for selecting the threshold: the first is to simply use the square root of the Monte Carlo size, $\sqrt{M}$, while the latter is based on unbiased risk estimation and gives a value better tailored to the specific problem. Note that these recommendations are based on exact importance sampling. If self-normalization is used, the general recommendation is instead to truncate at $\sqrt{M}$ times the mean of the un-normalized weights. Choosing the threshold level for truncated importance sampling requires managing the bias-variance trade-off \citep{Has09}. Truncating weights reduces the variance of our importance sampling estimator, but also introduces bias. This trade-off highlights the importance of selecting an appropriate threshold value: too large and the variance reduction will be negligible, but too small and the bias will be unacceptable.

\citet{Veh22} propose an alternative method for handling large importance weights, called Pareto Smoothed Importance Sampling (PSIS). The idea of this method is to fit a Generalized Pareto Distribution to the largest importance weights, then replace those large weights with quantiles of the fitted distribution. One of the parameter's fitted values also serves as a useful diagnostic for how closely our proposal distribution matches the target. An advantage of PSIS over the truncated importance sampling method of \citet{Ion08} is that the bias of PSIS is smaller, although it is not clear in general which method has better mean-squared error \citep[see][for extensive numerical comparisons]{Veh22}.

An approach presented by \citet{Qui99} and \citet{Lev01} seeks to use importance sampling to save computing time when running the MCEM algorithm. This computational efficiency is especially important in their context where Markov chain Monte Carlo sampling is used (see \ref{sec:MCMC}), as generating a single sample using this method often takes quite some time. Their idea is to generate a single sample from the missing data distribution with some reference value for $\theta$, $\theta_{ref}$. This sample is then re-used at every MCEM iteration, with conditional expectations under the current parameter value computed by taking importance ratios with respect to $\theta_{ref}$. An important question here is whether a single proposal distribution can be adequate for every target distribution along the MCEM trajectory. To address this concern, both sets of authors suggest running MCEM for a few iterations with fresh importance samples and only drawing the sample which will be used for their method after a sufficiently long ``burn-in'' period (\citeauthor{Qui99} use five MCEM iterations, while \citeauthor{Lev01} run the algorithm for one minute).

Other applications of importance sampling in the literature on MCEM and related methods include \citet{Caf05} incorporating importance weights into the formulas of their estimators. Recall also that the essential idea of Monte Carlo maximum likelihood is to use importance sampling to estimate the observed data likelihood (see Section \ref{sec:MCML}).

\subsection{Markov Chain Monte Carlo}
\label{sec:MCMC}

The core idea of Markov chain Monte Carlo (MCMC) sampling is to construct a Markov chain from which we can simulate, and which has stationary distribution equal to the target distribution. Popular methods to construct such a Markov chain are the Metropolis-Hastings algorithm and Gibbs sampling. See \citet{Gel13} or \citet{Rob04} for excellent textbook-length overviews. A popular implementation of MCMC sampling is the \texttt{Stan} programming language \citep{Sta22}, and its \texttt{R} interface, \texttt{RStan} \citep{Sta23}.

Both \mh\ and Gibbs sampling start with a random variable, $X = (X_1, \ldots, X_d)$, which we wish to simulate. Let $f$ be the density of $X$. These methods proceed by iteratively simulating draws of the vector $X$ from a Markov chain, whose distribution depends on the draw from the previous iteration. Note that our sampler may need time to reach its stationary distribution. It is therefore common to use a ``burn-in'' period, which amounts to discarding some number of draws from the beginning of the sample.

The Metropolis-Hastings algorithm closely resembles rejection sampling (see Section \ref{sec:other_samplers}). We start each iteration by generating a candidate value of $X$ from some proposal distribution (this distribution may depend on the previous iteration's value of $X$). Write $J(x | x_0)$ for the proposal density, where $x_0$ is the value of $X$ from the previous iteration. Next, we define an acceptance probability, $r := f(x) J(x_0 | x) / f(x_0) J(x | x_0)$, and accept the proposed value of $X$ with probability $r \wedge 1$. With probability $(1-r) \vee 0$, we reject the proposed $x$ and instead set the current iteration's value to $x_0$. We then proceed to the next iteration. Note that rejecting still adds an observation to our Monte Carlo sample, this value just happens to be identical to the one proceeding it. A popular extension of the \mh\ algorithm is Hamiltonian Monte Carlo \citep[see Section 12.4 of][]{Gel13}, in which the Markov chain is updated based not only on the current proposal, but also using a ``momentum'' term which is updated concurrently at each iteration.

Gibbs sampling is based on successively sampling each component of $X$ conditional on all the other components. The order in which this conditioning is performed is a bit subtle however. When generating $X_i$, we condition on the values of $X_1,\ldots, X_{i-1}$ from the current iteration and the values of $X_{i+1}, \ldots, X_d$ from the previous iteration. Once we reach the end of $X$, we start a new iteration.  More generally, we can group the components of $X$, and simulate an entire group conditional on the others. In practice, it may not always be easy to simulate directly from the necessary conditional distributions. One solution to this problem is the Hybrid-Gibbs (or Metropolis-within-Gibbs) sampler \citet[see Section 10.3.3 of][]{Rob04}, in which one step of the \mh\ algorithm is used at each stage of Gibbs sampling to facilitate necessary simulation.

Limit theory for estimators based on MCMC sampling are more complicated than the similar theory for estimators based on iid or importance sampling \citep{Gey91}. This additional complexity stems from the dependence between draws from an MCMC sampler. While convergence for iid and importance sampling can be established using the Law of Large Numbers and the classical Central Limit Theorem, similar results for MCMC sampling make use of the Ergodic Theorem and Markov chain Central Limit Theorem. A challenge in implementing the Markov chain Central Limit Theorem is that the asymptotic variance depends on pairwise covariances between points in the chain with arbitrarily large lags. Estimation of this asymptotic variance is therefore challenging in practice. See, e.g., Chapters 6 and 7 of \citet{Rob04}.

\citet{Lev01} propose a method to simplify the analysis of estimates based on MCMC sampling. Their approach consists of subsampling the original chain at random lags (called Poisson spacings) in such a way that elements of the subsample are approximately independent. The result is that a target function averaged over the subsampled chain satisfies the classical Central Limit Theorem (i.e. with no covariance terms in the asymptotic variance). We can estimate the mean and standard error of our subsample estimator using the entire chain, then construct confidence intervals for the mean of the target function based on our subsample. \citeauthor{Lev01} apply this strategy for constructing confidence intervals similarly to the method used by \citet{Boo99}. Here, we construct a confidence interval for the gradient of the EM objective function, and increase the Monte Carlo size if the confidence interval for the current iteration contains the subsample estimate from the previous iteration. This work is modified to provide uncertainty quantification for the MCEM update instead of the EM score by \citet{Lev04}. Furthermore, \citeauthor{Lev04} give a principled argument for how much to increase the Monte Carlo sample size when doing so is deemed necessary.

The convergence theory for SAEM is extended by \citet{Kuh04} to accommodate MCMC sampling. \citet{Tre14} and \citet{Bae16} then investigate the use of MCMC sampling with the MCEM algorithm and related methods in the context of fitting a complicated model for plant growth to real data on sugar beets. They find that MCMC is both accurate and computationally efficient in their particular application.

\subsection{Other Sampling Methods} \label{sec:other_samplers}

While importance sampling and MCMC are the most discussed sampling schemes in the context of the MCEM algorithm, some others do exist. Rejection sampling, sequential Monte Carlo (SMC) and quasi-Monte Carlo (QMC) are some such alternatives. However, these methods do not appear to be as frequently employed with the MCEM algorithm and related methods.

Rejection sampling closely resembles importance sampling, except instead of weighting each proposal by the likelihood ratio, we either accept or reject the proposed observation with probability proportional to the likelihood ratio \citep{Rob04}. Typically, rejection sampling is continued until the number of accepted proposals reaches a desired sample size. These accepted points are then treated as an iid sample from the target distribution. Although the output of rejection sampling sounds ideal (much of the difficulty with importance sampling comes from having to account for simulated points not having been drawn from the target distribution), the cost comes in increased computation time. The number of draws from the proposal required to get a fixed number of accepted draws is random, and can be quite high if the proposal distribution does not closely match the target. Indeed, if we instead fix the number of draws from the proposal distribution, importance sampling can be shown to have lower variance than the corresponding rejection sampling scheme \citep[see Section 8.8 of][]{Cho20}. \citet{Boo99} discuss the use of rejection sampling with their implementation of MCEM, but found that importance sampling was faster and gave similar results.

SMC, is a form of adaptive sampling in which a sequence of samples is generated such that the distribution of these samples converges to some target. The update from one sample to the next seeks to balance improving the proposal with maintaining computational efficiency. See \citet{Del06} for a survey paper, or \citet{Cho20} for a book-length overview of SMC. \citet{Tre13} uses several versions of SMC to generate Monte Carlo samples for the MCEM algorithm in their analysis of a model for plant growth. Follow-up work by \citet{Tre14} however, suggests that MCMC is more effective in this context. \citet{Mof14} use SMC in an MCEM analysis to sample from a truncated multivariate normal distribution with complicated support.

QMC is a modification of ordinary Monte Carlo, in which points are chosen deterministically instead of being sampled randomly \citep{Caf98}. The goal of this deterministic selection is to cover the sample space in a way that is somehow optimal. This optimality is achieved by using one of many ``low-discrepancy sequences''. An important limitation of QMC methods is that it is often challenging to measure their accuracy. This challenge is addressed by randomized QMC \citep{LEc02}, in which the starting point of the low-discrepancy sequence is chosen randomly, thereby making the estimated expectation random. Importantly, incorporating randomness into QMC allows us to measure the accuracy of our estimate by computing the variance across some number of independent runs. \citet{Jan05} illustrates the use of randomized QMC with the MCEM algorithm of \citet{Boo99}. He finds that randomized QMC is much more efficient than ordinary Monte Carlo on a spatial statistics problem with fixed computational effort, even after dividing this computational budget among multiple independent runs of randomized QMC to facilitate variance estimation.

\section{Comparisons Between Methods}
\label{sec:comparison}

Numerous authors have performed comparisons between methods discussed in Sections \ref{sec:MCEM} and \ref{sec:alternatives}. In this section, we discuss these comparisons and their findings.

\citet{McC97} uses a simulation study to compare the Monte Carlo EM (MCEM) and Monte Carlo Maximum Likelihood (MCML) methods, along with a Monte Carlo version of the Newton-Raphson algorithm (MCNR). Their MCEM implementation starts with fixed Monte Carlo size, then increases this size at iterations 20 and 40. It is not clear how these jump points were selected, nor how convergence was assessed beyond examining plots. Their MCEM implementation thus most closely resembles that of \citet{Wei90}. A similar schedule of Monte Carlo sizes and termination was used for MCNR, whereas MCML uses a much larger Monte Carlo size (the sample size for MCML was not increased). \citeauthor{McC97} also investigates the use of MCML after MCEM and MCNR (i.e. using MCEM or MCNR to choose the reference parameter for MCML). This comparison is made using a logit-normal mixed-effects model with one random effect and one fixed effect. \citeauthor{McC97} found that MCEM and MCNR perform better than MCML alone, but that following either MCEM or MCNR with MCML was even better. They did not find that following-up with multiple iterations of MCML was preferable to a single run of MCML.

In addition to presenting their stochastic approximation EM (SAEM) method, \citet{Gu98I} compare this method with MCEM. They give a comparative analysis on a dataset of motorette failure times \citep[see][for a diagram and explanation of what a motorette is]{Rai16}. The statistical model used is a linear regression with right-censoring. Their MCEM implementation is the same as that of \citet{Wei90} (more precisely, they say their implementation ``is from \citet{Tan93}'', which matches \citeauthor{Wei90}). It is not clear what Monte Carlo size they use to start, and it appears that this size is never augmented. Their SAEM implementation is as previously described in their paper (see Section \ref{sec:SAEM}), with an MC size of 1, step size at iteration $k$ (i.e., $\alpha_k$) of $1/k$, and pre-multiplying matrix chosen adaptively based on the current iteration, as given by Equation (13) of their paper. \citeauthor{Gu98I} find that SAEM converges much more quickly than MCEM. In fact, based on their Figure 1, it is not clear that MCEM is converging to the MLE at all. These authors also give a heuristic argument that SAEM should converge much more quickly than MCEM based on the number of MC samples used at each iteration and the number of iterations required to converge to the MLE. However, this argument is based on a fixed Monte Carlo size at each iteration and is thus not directly relevant to the MCEM implementations discussed in Section \ref{sec:MCEM}.

\citet{Boo99} compare their MCEM implementation with that of \citet{Wei90} as presented in \citet{McC97}. They investigate performance on three datasets: the logit-normal mixed-effects model from \citet{McC97}, a dataset comparing smoking with lung cancer \citep{Dor54} and the salamander dataset given in \citet{McC89}. On \citeauthor{McC97}'s dataset, \citeauthor{Boo99} ran their own method to convergence, then ran \citeauthor{Wei90}'s method for the same amount of time. They found that their method converges more quickly to the observed data MLE than the method of \citeauthor{Wei90} does (the model here is sufficiently simple that the observed data MLE can be obtained directly). They also investigate ``pure Monte Carlo error'' by starting both methods at the observed data MLE (since the MLE is a fixed point of EM, any change here is error due to Monte Carlo variability), and find that their method performs better on this metric as well.

Note that the comparisons made by \citeauthor{Boo99} use rejection sampling or importance sampling for their own method and Markov chain Monte Carlo (MCMC) sampling for the method of \citet{Wei90} (see Section \ref{sec:simulation} for a discussion of these simulation techniques). Of note is that MCMC typically requires a ``burn-in'' period to reach the required stationary distribution, so the results presented by \citeauthor{Boo99} may not show the best that we can expect from MCMC.

\citet{Boo01} compare the MCEM, SAEM and MCML (referred to as stochastic maximum likelihood, or SML) methods. They perform their comparison on a simple one-way mixed-effects linear model, with known variance component and error variance. Their MCEM implementation matches that of \citet{Boo99}, while their SAEM implementation is that of \citealp{Del99}. Note that although the presentation of SAEM in \citeauthor{Boo01} appears different from ours, it is not hard to show that the two are equivalent. Their MCML implementation uses the missing data distribution with a fixed value for the unknown parameter as reference distribution. The comparison between these methods is done partly analytically and partly by simulation. The mean squared errors (MSEs) of MCEM and SAEM for reproducing the MLE can be obtained analytically. \citeauthor{Boo01} thus compare these two methods directly and find that MCEM performs better when the problem is harder (i.e. larger variance component). The MSE of MCML on the other hand, must be approximated by simulation. This MSE is estimated using 500 replicates, and a $95\%$ Wald-type confidence interval is constructed. The upper and lower bounds of this confidence interval are then compared to the analytical MSE of MCEM (the authors do not compare MCML with SAEM). The result is that MCML is competitive with MCEM, and that MCML even performs better for some parameter settings (typically, when the variance component is small). \citeauthor{Boo01} point out that the comparison between MCEM and MCML is not entirely fair here though, because taking the variance component as known makes simulation for MCML unrealistically easy. In more serious problems, difficulty in choosing a proposal distribution for MCML will likely lead to worse performance.

\citet{Boo01} extend the heuristic argument in \citet{Gu98I} to account for Monte Carlo sizes changing with MCEM iteration. They argue that, in the scalar case, MCEM should converge more quickly than SAEM when the so-called ``fraction of missing information'' is larger than $\exp(-1)$. The fraction of missing information is defined as $\mathcal{I}_c(\hat{\theta})^{-1} \mathcal{I}_m(\hat{\theta})$, where $\hat{\theta}$ is the observed data MLE (see Proposition \ref{thm2:EM_decomp} for definitions of $\mathcal{I}_m$ and $\mathcal{I}_c$). This quantity is closely related to the convergence rate of the EM algorithm \citep{Men94,McL08}. \citeauthor{Boo01} also find that MCML outperforms both MCEM and SAEM. However, they use a proposal distribution for MCML which contains information about the parameters, thereby giving this method an unfair advantage. These findings are consistent with those of \citet{McC97} about MCML, where this method can perform very well, but is highly sensitive to the choice of proposal distribution.

\citet{Jan03} extend the work of \citet{Boo01}, specifically the comparison between MCEM and MCML. \citeauthor{Jan03} focus on analytical comparisons; as such, they use a fixed Monte Carlo size across iterations to make their calculations more tractable. They also use an unrealistic proposal distribution for MCML which requires that we know the observed data MLE. The authors derive the asymptotic variance of MCEM and MCML, and investigate the asymptotic relative efficiency (ARE) of these two methods. It turns out that the ARE depends directly on the eigenvalues of the (matrix-valued) fraction of missing information defined in the previous paragraph. In particular, the efficiency of MCEM relative to MCML goes to infinity as the fraction of missing information goes to 1. That is, as a problem gets harder in the sense that less information is available in the observed data, we expect MCEM to perform better relative to MCML. We also expect MCEM to perform still better on real problems, since the above analysis is based on an inaccessible proposal distribution for MCML. \citeauthor{Jan03} illustrate their analytical calculations on the one-way mixed-effects linear model from \citet{Boo01} and a logistic-normal generalized linear mixed-effects model. The latter consists of a simulation study which compares, among other things, the effect of the choice of proposal distribution on MCML. They found that the average estimates from MCEM and MCML (i.e. averaged over simulation replicates) are fairly consistent, but that variability of the MCML estimates is much higher. In particular, entries in the empirical covariance matrix for MCML grow rapidly as the reference parameter value for the proposal distribution moves away from the MLE.

\citet{Caf05} compare their version of MCEM with that of \citet{Boo99} on simulated data from the logit-normal mixed-effects model of \citet{McC97}. Specifically, \citeauthor{Caf05} simulate 10000 datasets and compare their method to the one proposed by \citet{Boo99}, with the latter terminating after the change in estimated parameter is small for between 1 and 4 consecutive iterations. They compare methods on how many draws from the missing data distribution are used (a measure of computational cost), the fraction of time spent in the final iteration, and the relative error in estimating both the MLEs of the parameters and their covariance matrix. \citeauthor{Caf05} find that, for a fixed amount of computing, their method performs a bit worse than \citet{Boo99} on estimating the parameters, but that their estimates of the covariance matrix are more accurate. They also find that their method spends a much larger fraction of its time in the final iteration. They argue that this is an advantage, since the Monte Carlo sample used at the final MCEM iteration can then be used to estimate moments of the missing data distribution.

\citet{Tre14} investigate the performance of MCEM on multiple axes for fitting a complex model of plant growth to date on sugar beets. First, they compare sequential Monte Carlo (SMC) sampling with MCMC (specifically, hybrid Gibbs sampling) using a fixed MC size at each iteration, as in \citet{Wei90}. From this comparison, they find that MCMC outperforms several popular versions of SMC. See \citet{Tre13} for more on their implementations of SMC. \citeauthor{Tre14} then implement the MCEM method of \citet{Caf05} and use a Monte Carlo study to compare the effect of various tuning parameters. Finally, they repeat their comparisons on a real data analysis, with 50 independent runs of each method for comparison. Average parameter estimates are very similar across all methods, but they do find differences in the variance over independent runs. Specifically, MCMC again outperforms SMC for MCEM with fixed Monte Carlo size, although at the cost of mildly increased computation time. They also find that the MCEM implementation of \citet{Caf05} gives similar performance as naive MCEM with much less computation time, or better performance with only somewhat less computation, depending on how tuning parameters are set.

\citet{Bae16} compare the SAEM method of \citet{Del99} with a version of MCEM similar to that given in \citet{McC97} on a model for the growth of sugar beet plants. \citeauthor{Bae16} use two different versions of MCMC sampling (Metropolis-Hastings and hybrid Gibbs), and two different proposals. Ultimately, they find little difference in the accuracy of their SAEM and MCEM implementations on either simulated or real data, but that SAEM requires much less computing time. They ultimately recommend SAEM, with the disclaimer that a more sophisticated implementation of MCEM \citep[e.g.,][]{Caf05,Boo99} might reduce the discrepancy between methods.

\subsection{Synthesis}
Many of the comparisons we have discussed use relatively naive implementations of MCEM \citep{McC97, Gu98I,Jan03}. In particular, as far as we can tell, \citet{Gu98I} use a poor implementation which never increases the Monte Carlo size, so it is unsurprising that they found MCEM performs poorly. \citet{Jan03} also use a fixed Monte Carlo size, but theirs is sufficiently large that we can expect the behaviour of MCEM to be close to that of the deterministic EM algorithm. Similarly, \citet{McC97} increase the Monte Carlo size at fixed iterations, and the final size is quite large (5000 for the final ten iterations).

Despite these concerns about Monte Carlo size, MCEM tends to perform quite well in simulations and analytical comparisons. Only \citet{Gu98I} and \citet{Boo01} found an instance where MCEM performed substantially worse than another method, and both of these simulation studies had features which biased results away from MCEM. Additionally, \citep{McC97} found that following MCEM with MCML tends to improve performance over MCEM alone. We see this as an endorsement of MCEM, since MCML is known to be sensitive to how well the proposal distribution (i.e. the output of MCEM) approximates the target distribution. \citet{Bae16} find that SAEM is more computationally efficient for their problem than MCEM, but their implementation of MCEM does not adapt the Monte Carlo size across iterations and is therefore wasteful of computational effort in early stages. These findings suggest that the MCEM algorithm should be one of the first methods considered when approaching a missing data problem in which the calculations required to implement EM are intractable.

When selecting which implementation of the MCEM algorithm to use, unfortunately, limited information is available. One example of such a comparison is given by \citet{Tre14}, in which they find that the MCEM implementation of \citet{Caf05} performs better than that of \citet{Wei90}. \citet{Caf05} offer another such comparison between their method and that of \citet{Boo99}, although the findings are not conclusive for one method over the other. Finally, \citet{Bae16} compare the MCEM method of \citet{McC89} with the SAEM method of \citet{Del99}, and find that SAEM is more computationally efficient, but neither method is noticeably more accurate. Given the breadth of implementations for the MCEM and SAEM algorithms, this is a very limited basis for choosing a method in practice. The limited range of comparisons between methods is a major gap in the literature on MCEM and related methods. We return to this point in Section \ref{sec:conc}.

\section{Conclusions}
\label{sec:conc}

The EM algorithm is a very useful tool for the analysis of missing data. The MCEM algorithm and related methods allow us to apply ideas from the EM algorithm in contexts where analytical calculations are intractable. In this paper, we present several implementations of the MCEM algorithm, as well as some alternative methods which can be applied to missing data problems. We also address the practical concern of how to generate the Monte Carlo samples required for our methods. Specifically, we discuss several methods for simulating from an arbitrary target distribution when direct sampling is not available. Finally, we discuss numerous comparisons between the MCEM algorithm and related methods. This gives both practitioners and researchers a basis for selecting which method to use in an analysis or for further study.

Over the course of writing this review, we have identified a number of gaps in the literature. Exploring these areas would be a significant contribution to the understanding and practice of the MCEM algorithm and related methods.

Firstly, there are many papers which include comparisons between MCEM and its alternatives. However, many of these comparisons are in the context of showing that a newly proposed method is effective. Papers which do focus on comparing existing methods tend to be limited in scope, including only a few methods and at most a few datasets. It would be valuable to have a more systematic comparison, in which many methods are compared (e.g., those discussed above) on a wide variety of datasets. One goal of such an analysis would be to identify features of a dataset which recommend the use of one method over another. A related limitation of the literature we have reviewed here is that most comparisons between models are empirical. Some work has been done to investigate theoretical error rates for a small range of algorithms \citep{Boo01, Jan03}, but more investigation of this form would further our understanding of systematic differences between methods. Furthermore, many methods we review include tuning parameters, which might govern the amount of Monte Carlo samples used or when to terminate the algorithm. Some effort has been made to help select values for these tuning parameters \citep[e.g.,][]{Jan06,Bae16}, but in general this is a difficult problem. Since the goal of tuning is typically to balance performance with computational effort, it can be computationally prohibitive to use the usual strategy of trying several options and seeing which looks best. Thus, it would be extremely valuable for practitioners to have general guidelines on how to set these tuning parameters.

Another valuable direction for future work is in the theoretical underpinnings of the MCEM and SAEM algorithms. While some such work does exist \citep[e.g.,][]{Del99, For03}, most justification for new methods is heuristic and empirical. Specific directions to explore include extending existing limit theory for MCEM and SAEM beyond (curved) exponential family models. This is especially important for the SAEM method of \citet{Del99}, since the link between their algorithm and the classical stochastic approximation theory depends on this exponential family structure. Another interesting direction is presented by \citet{Nie00}, in which iterations of the MCEM algorithm are viewed as consecutive observations of a Markov chain. This observation is leveraged to obtain an asymptotic covariance for MCEM estimators, both when estimating the MLE (i.e., conditional on the observed data) and the parameter of interest. This work is done specifically with fixed Monte Carlo size at each iteration (and growing with the observed data sample size when the latter goes to infinity), which is inconsistent with most practical MCEM algorithms.

As with any research project, there are always new directions to modify and extend the MCEM algorithm. For example, much attention has centered on the implementations of \citet{Boo99} or \citet{Caf05}, as well as that of \citet{Wei90} or \citet{McC97}. The former are likely popular due to their inferential flavour, while the latters' popularity is surely due to their simplicity. However, the method of \citet{Cha95} is quite different from any others described here and, although their method is mentioned often (at time of writing their paper has 435 citations on Google Scholar), it rarely shows up in empirical comparisons with other algorithms. This lack of attention is especially surprising since the variance estimator proposed by \citeauthor{Cha95} enjoys an unusually fast convergence rate (the estimated standard deviation is $m$-consistent instead of the more common $\sqrt{m}$-consistent, where $m$ is the Monte Carlo sample size). Another useful modification of the MCEM algorithm is to incorporate quasi-Monte Carlo sampling \citep[see][]{Jan05}. In principle, quasi-Monte Carlo sampling can be a low-effort way to dramatically reduce Monte Carlo variability. While quasi-Monte Carlo sampling has been mentioned elsewhere in the context of the MCEM algorithm \citet{Kuo08}, we are not aware of any papers focusing on this combination other than \citet{Jan05}.

We hope that this review helps make the MCEM algorithm, and Monte Carlo methods for missing data more generally, more accessible. We also hope that our work generates increased interest and development in the important field of computational methods for missing data.

\newpage

\begin{appendices}

    \section{Likelihood for Gene Frequency Estimation}
    \label{app:blood}

    In this appendix, we present details for the analysis of our example of estimating gene frequency. See Section \ref{sec:eg-genes} for formulation of the model and definition of notation.

    \subsection{Observed Data Likelihood, Score and Information}
    \label{app:blood_obs}

    Let $\pi_i$ be the probability of blood type $i$. The observed data log-likelihood for our model can be written as follows:
    \begin{align}
        \ell(\theta; y) &= \log \begin{pmatrix} n \\ y \end{pmatrix} + \sum y_i \log \pi_i(\theta)\\
        & \equiv \sum y_i \log \pi_i\\
        &\equiv 2 y_1 \log r + y_2 \log(p^2 + 2pr) + y_3 \log(q^2 + 2qr) + y_4 \log pq
    \end{align}
    where we use $\equiv$ to denote equality up to additive constants which do not depend on $\theta$.

    Differentiating $\ell$ with respect to $\theta$ and recalling that $r = 1 - p - q$, so $\partial_p r = \partial_q r = -1$, we get the following expression for the observed data score, $S$.
    \begin{align}
        S(\theta; y) &= \begin{pmatrix}
            \partial_p \ell(\theta; y)\\
            \partial_q \ell(\theta; y) 
        \end{pmatrix} \mathrm{, where}\\
        \partial_p \ell(\theta; y) &= - \frac{2 y_1}{r}  + \frac{2r y_2}{p^2 + 2pr}  - \frac{2q y_3}{q^2 + 2qr}  + \frac{y_4}{p} \label{eq:gene_obs_score1}\\
        \partial_q \ell(\theta; Y) &= - \frac{2 y_1}{r}  - \frac{2p y_2}{p^2 + 2pr}  + \frac{2r y_3}{q^2 + 2qr}  + \frac{y_4}{q} \label{eq:gene_obs_score2}
    \end{align}
    Solving the score equation, $S(\theta) = 0$, thus reduces to solving a system of two polynomials in $p$ and $q$. Since $p$ and $q$ are proportions, we reject any roots outside the unit simplex.

    Differentiating $\ell$ again and multiplying by $-1$ gives the observed data information matrix, $I$. To simplify notation, let $p_y = p^2 + 2pr$ and $q_y = q^2 + 2qr$.
    \begin{align}
        I(\theta;y) &= - \begin{bmatrix}
            \partial^2_p \ell(\theta; y) & \partial_{p,q} \ell(\theta; y)\\
            \partial_{p,q} \ell(\theta; y) & \partial^2_q \ell(\theta; y)
        \end{bmatrix} \mathrm{, where}\\
        \partial^2_p \ell(\theta; y) &=  \frac{2y_1}{r^2} + \frac{2 y_2 (p_y + 2r^2)}{p_y^2} + \frac{4 y_3 q^2}{q_y^2} + \frac{y_4}{p^2}\\
        \partial_{p,q} \ell(\theta; y) &=  \frac{2y_1}{r^2} + \frac{2 y_2 p^2}{p_y^2} + \frac{2 y_3 q^2}{q_y^2}\\
        \partial^2_q \ell(\theta; y) &=  \frac{y_1}{r^2} + \frac{4 y_2 p^2}{p_y^2} + \frac{2 y_3 (q_y + 2r)}{q_y^2} + \frac{y_4}{q^2}
    \end{align}
    The asymptotic standard error of our MLE is $I^{-1}$, evaluated at the estimate.

    \subsection{Complete Data Likelihood, Score and Information}
    \label{app:blood_complete}

    The complete data distribution for our model can be written as follows. Write $\rho_i$ for the probability of genotype $i$. See Table \ref{tab2:blood_type_complete} for the values of these probabilities.
    \begin{align}
        \ell_c(\theta; y,x) &= \log \begin{pmatrix} n \\ x \end{pmatrix} + \sum x_i \log \rho_i(\theta)\\
        & \equiv \sum y_i \log \rho_i\\
        &\equiv 2 x_1 \log r + x_2 \log pr + 2 x_3 \log p + x_4 \log qr + 2 x_5 \log q + x_6 \log pq\\
        &= (2 x_1 + x_2 + x_4) \log r + (x_2 + 2 x_3 + x_6) \log p + (x_4 + 2 x_5 + x_6) \log q\\
        &= n_O \log r + n_A \log p + n_B \log q
    \end{align}
    where $n_O$, $n_A$ and $n_B$ are the number of times allele O, A and B arise respectively in the sampled genotypes. Note that $\ell_c$ depends on $y$ only through $x$, so we suppress $y$ from our notation for complete data quantities. The complete data score function is
    \begin{align}
        S_c(\theta; x) &= \begin{pmatrix}
            \partial_p \ell_c(\theta; x)\\
            \partial_q \ell_c(\theta; x) 
        \end{pmatrix} \mathrm{, where}\\
        \partial_p \ell_c(\theta; x) &= \frac{x_2 + 2 x_3 + x_6}{p} - \frac{2x_1 + x_2 + x_4}{r} = \frac{n_A}{p} - \frac{n_O}{r} \label{eq:comp_score1}\\
        \partial_p \ell_c(\theta; x) &= \frac{x_4 + 2 x_5 + x_6}{q} - \frac{2x_1 + x_2 + x_4}{r} = \frac{n_B}{q} - \frac{n_O}{r} \label{eq:comp_score2}
    \end{align}
    Notice that the score is linear in $x$. To make this relationship explicit, we write $S_c(\theta; x) = \mathscr{S}(\theta) x$, where $\mathscr{S}(\theta) \in \bR^{2 \times 6}$ is a matrix consisting of the coefficients on $x$ in (\ref{eq:comp_score1}) and (\ref{eq:comp_score2}). We will make use of this linearity in Section \ref{app:ASE}.

    Next, we give the information matrix for the complete data.
    \begin{align}
        I_c(\theta;x) &= - \begin{bmatrix}
            \partial^2_p \ell_c(\theta; x) & \partial_{p,q} \ell_c(\theta; x)\\
            \partial_{p,q} \ell_c(\theta; x) & \partial^2_q \ell_c(\theta; x)
        \end{bmatrix} \mathrm{, where}\\
        \partial^2_p \ell_c(\theta; x) &=  \frac{x_2 + 2 x_3 + x_6}{p^2} + \frac{2x_1 + x_2 + x_4}{r^2} = \frac{n_A}{p^2} + \frac{n_O}{r^2}\\
        \partial_{p,q} \ell_c(\theta; x) &=   \frac{2x_1 + x_2 + x_4}{r^2} = \frac{n_O}{r^2}\\
        \partial^2_q \ell_c(\theta; x) &=  \frac{x_4 + 2 x_5 + x_6}{q^2} + \frac{2x_1 + x_2 + x_4}{r^2} = \frac{n_B}{q^2} + \frac{n_O}{r^2}
    \end{align}

    \subsection{Missing Data Distribution}
    \label{app:blood_miss}

    Many quantities which arise in the EM and MCEM algorithms depend on the missing data distribution (i.e.\ the conditional distribution of $X$ given $Y=y$). This distribution is best described componentwise in $X$. First, note that $X_1 = y_1$ and $X_6 = y_4$. Next, we have that $X_2 + X_3 = y_2$ and $X_4 + X_5 = y_3$. Thus, we can write $X_2 |Y=y \sim \mathrm{Bin}(y_2, 2pr / (p^2 + 2pr))$ and $X_4 |Y=y \sim \mathrm{Bin}(y_3, 2qr / (q^2 + 2qr))$. Finally, we recover $X_3$ and $X_5$ by subtracting $X_2$ from $y_2$ and $X_4$ from $y_3$ respectively.

    We make frequent use of the first few conditional moments of $X$, so they are listed here for convenience. Let $\alpha_1 = 2pr / (p^2 + 2pr)$ be the probability parameter for the binomial distribution of $X_2$ given $Y$, and $\alpha_2 = 1 - \alpha_1$. Similarly, let $\beta_1 = 2qr / (q^2 + 2qr)$ correspond to $X_4$ and $\beta_2 = 1 - \beta_1$.
    \begin{align}
        \bE(X | Y=y) &= \begin{pmatrix}
            y_1,  y_2 \alpha_1,  y_2 \alpha_2,  y_3 \beta_1,  y_3 \beta_2,  y_4
        \end{pmatrix}^T\\
        &=: \mu_m\\
        \bV(X | Y=y) &= \begin{pmatrix}
            0 & 0 & 0 & 0 & 0 & 0\\
            0 & y_2 \alpha_1 \alpha_2 & - y_2 \alpha_1 \alpha_2 & 0 & 0 & 0\\
            0 & - y_2 \alpha_1 \alpha_2 & y_2 \alpha_1 \alpha_2 & 0 & 0 & 0\\
            0 & 0 & 0 & y_3 \beta_1 \beta_2 & - y_3 \beta_1 \beta_2 & 0\\
            0 & 0 & 0 & -y_3 \beta_1 \beta_2 & y_3 \beta_1 \beta_2 & 0\\
            0 & 0 & 0 & 0 & 0 & 0
        \end{pmatrix}\\
        &=: \Sigma_m\\
        \bE(XX^T | Y=y) &= \Sigma_m + \mu_m \mu_m^T
    \end{align}
    Conditional expectations of the number of alleles of each kind will be of particular interest.
    \begin{align}
        \nu_O & := \bE(n_O|y)\\
         &= 2y_1 + \frac{y_2 pr}{p^2 + 2pr} + \frac{y_3 qr}{q^2 + 2qr}\\
        &= 2y_1 + y_2 \left( \frac{\rho_2}{\rho_2 + \rho_3} \right) + y_3 \left( \frac{\rho_4}{\rho_4 + \rho_5} \right) &\left( = 2y_1 +  y_2 \left( \frac{\rho_2}{\pi_2} \right) + y_3 \left( \frac{\rho_4}{\pi_3} \right) \right)\\
        \nonumber \\
        \nu_A & := \bE(n_A|y)\\
        &= \frac{2 y_2 pr}{p^2 + 2pr} + \frac{2y_2 p^2}{p^2 + 2pr} + y_4\\
        &= y_2 \left( \frac{\rho_2}{\rho_2 + \rho_3} + \frac{2\rho_3}{\rho_2 + \rho_3} \right) + y_4 &\left(= y_2 \left( \frac{\rho_2}{\pi_2} + \frac{2\rho_3}{\pi_2} \right) + y_4 \right)\\
        &= y_2 \left( 1 + \frac{p^2}{p^2 + 2pr} \right) + y_4\\
        \nonumber \\
        \nu_B &:= \bE(n_B|y)\\
        &= \frac{2 y_3 qr}{q^2 + 2qr} + \frac{2y_3 q^2}{q^2 + 2qr} + y_4\\
        &= y_3 \left( \frac{\rho_4}{\rho_4 + \rho_5} + \frac{2\rho_5}{\rho_4 + \rho_5} \right) + y_4 &\left(= y_3 \left( \frac{\rho_4}{\pi_3} + \frac{2\rho_5}{\pi_3} \right) + y_4 \right)\\
        &= y_3 \left( 1 + \frac{q^2}{q^2 + 2qr} \right) + y_4
    \end{align}

    \subsection{EM Algorithm}
    \label{app:EM}

    In order to apply the EM algorithm, we must construct and optimize the EM objective function. That is, we must compute $Q(\theta|\theta_0) = \bE_{\theta_0} \left[ \ell_c(\theta; y, X) | Y=y \right]$. The EM objective function can be written as
    \begin{align}
		Q(\theta | \theta_0) &:= \bE_{\theta_0} [\ell_c (\theta; X) | Y=y]\\
		&\equiv \nu_O^{(0)} \log r + \nu_A^{(0)} \log p + \nu_B^{(0)} \log q
	\end{align}
    where a superscript zero denotes that the quantity is computed by taking an expectation under $\theta_0$. Differentiating $Q$ with respect to $p$ and $q$ and setting the result to zero, we get the following system of equations:
    \begin{align}
        \frac{\nu_A^{(0)}}{p} = \frac{\nu_O^{(0)}}{r} \label{eq:blood_update1}\\
        \frac{\nu_B^{(0)}}{q} = \frac{\nu_O^{(0)}}{r} \label{eq:blood_update2}
    \end{align}
    This system of equations can be used to solve for a fixed point of the EM algorithm by evaluating $\nu_O$, $\nu_A$ and $\nu_B$ at $\theta$ instead of $\theta_0$. Note that the fixed point equations which result from this substitution exactly match the observed data score equations given by equations (\ref{eq:gene_obs_score1}) and (\ref{eq:gene_obs_score2}). Indeed, this relationship holds in general under mild conditions \citep{Wu83}.

    \subsection{Asymptotic Standard Error}
    \label{app:ASE}

    Recall that the EM algorithm computes the MLE, which has asymptotic covariance matrix equal to the inverse Fisher information matrix evaluated at the true parameter value. In practice, we estimate this covariance with the inverse of the observed information matrix evaluated at the MLE. Using Proposition \ref{thm2:info_decomp}, we can calculate the observed information matrix using conditional expectations of quantities derived from the complete data likelihood. 
    
    To this end, we need to evaluate the conditional expectations in expression (\ref{eq:info_at_MLE}) of Proposition \ref{thm2:info_decomp}. It is convenient for us to write $S_c(\theta) =: \mathscr{S}(\theta) X$ (see Appendix \ref{app:blood_complete}). Then 
    \begin{align}
        I_c(\hat{\theta}) &= \begin{bmatrix}
            \frac{\nu_A}{p^2} + \frac{\nu_O}{r^2} & \frac{\nu_O}{r^2}\\
            \frac{\nu_O}{r^2} & \frac{\nu_B}{q^2} + \frac{\nu_O}{r^2}
        \end{bmatrix} \mathrm{, and}\\
        \bE_{\hat{\theta}} [ S_c(\hat{\theta}) S_c(\hat{\theta})^T | Y=y] &= \mathscr{S}(\hat{\theta}) \bE_{\hat{\theta}} \left[ X X^T | Y=y \right] \mathscr{S}(\hat{\theta}) \\
        &= \mathscr{S}(\hat{\theta}) (\Sigma_m + \mu_M \mu_M^T) \mathscr{S}(\hat{\theta})\\
    \end{align}
    While it is possible to expand the above expressions, they quickly become too long to easily interpret. We instead leave these as computational formulas and use them as a guide for writing \texttt{R} or \texttt{Julia} code.

\end{appendices}

\newpage

\bibliographystyle{plainnat}
\bibliography{mybib}

\end{document}